\DeclareFontShape{T1}{lmr}{b}{sc}{<->ssub*cmr/bx/sc}{}
\DeclareFontShape{T1}{lmr}{bx}{sc}{<->ssub*cmr/bx/sc}{}
\numberwithin{equation}{section}
\newcommandx{\unsure}[2][1=]{\todo[linecolor=red,backgroundcolor=red!25,bordercolor=red,#1]{#2}}
\newcommandx{\change}[2][1=]{\todo[linecolor=blue,backgroundcolor=blue!25,bordercolor=blue,#1]{#2}}
\newcommandx{\info}[2][1=]{\todo[linecolor=OliveGreen,backgroundcolor=OliveGreen!25,bordercolor=OliveGreen,#1]{#2}}
\newcommandx{\improvement}[2][1=]{\todo[linecolor=black,backgroundcolor=black!25,bordercolor=black,#1]{#2}}
\newcommandx{\thiswillnotshow}[2][1=]{\todo[disable,#1]{#2}}
\crefname{proposition}{Proposition}{Propositions}
\crefname{equation}{}{}
\newtheorem{theorem}{Theorem}[section]
\theoremstyle{definition}
\newtheorem{example}[theorem]{Example}
\crefname{assumption}{Assumption}{Assumptions}
\crefname{definition}{Definition}{Definitions}
\crefname{corollary}{Corollary}{Corollaries}
\crefname{enumi}{item}{items}
\newsavebox\myboxA
\newsavebox\myboxB
\newlength\mylenA
\newcommand*\xoverline[2][0.75]{%
  \sbox{\myboxA}{$\m@th#2$}%
  \setbox\myboxB\null
  \ht\myboxB=\ht\myboxA%
  \dp\myboxB=\dp\myboxA%
  \wd\myboxB=#1\wd\myboxA
  \sbox\myboxB{$\m@th\overline{\copy\myboxB}$}
  \setlength\mylenA{\the\wd\myboxA}
  \addtolength\mylenA{-\the\wd\myboxB}%
  \ifdim\wd\myboxB<\wd\myboxA%
    \rlap{\hskip 0.5\mylenA\usebox\myboxB}{\usebox\myboxA}%
  \else
    \hskip -0.5\mylenA\rlap{\usebox\myboxA}{\hskip 0.5\mylenA\usebox\myboxB}%
  \fi}
\DeclareMathOperator{\Z}{\mathbb{Z}}
\DeclareMathOperator{\R}{\mathbb{R}}
\DeclareMathOperator{\C}{\mathbb{C}}
\DeclareMathOperator{\E}{\mathbb{E}}
\DeclareMathOperator{\tr}{tr}
\renewcommand{\i}{\mathbf{i}}
\newcommand{\ds}{\displaystyle}
\DeclareMathOperator{\diag}{diag}
\DeclareMathOperator{\BO}{\mathcal{O}}
\renewcommand{\epsilon}{\varepsilon}
\DeclareMathOperator{\dd}{d\!}
\renewcommand{\i}{\mathbf{i}}
\DeclareMathOperator{\iL}{{\mathsf{L}}}
\DeclareMathOperator{\iR}{{\mathsf{R}}}
\DeclareMathOperator{\iLR}{{\mathsf{L},\mathsf{R}}}
\newcommand{\mc}[1]{\mathcal{#1}}
\newcommand{\abs}[1]{\left\lvert#1\right\rvert}
\newcommand{\norm}[1]{\left\lVert#1\right\rVert}
\DeclareMathOperator\len{len}
\newcommand{\ldz}{\mathbb{L}_D(\Z)}
\newcommand{\ldm}{\mathbb{L}_D(M)}
\newcommand{\oo}[1]{\chi^{(#1)}}
\newcommandx{\silvio}[2][1=]{\todo[linecolor=blue,backgroundcolor=blue!25,bordercolor=blue,#1]{Silvio: #2}}
\newcommandx{\alex}[2][1=]{\todo[linecolor=red,backgroundcolor=red!25,bordercolor=red,#1]{Alex: #2}}
\title[Subwavelength Localisation in Disordered Systems]{Subwavelength  Localisation in Disordered Systems}
\begin{document}
 \author[H. Ammari]{Habib Ammari \,\orcidlink{0000-0001-7278-4877}}
\address{\parbox{\linewidth}{Habib Ammari\\
 ETH Z\"urich, Department of Mathematics, Rämistrasse 101, 8092 Z\"urich, Switzerland, \href{http://orcid.org/0000-0001-7278-4877}{orcid.org/0000-0001-7278-4877}}.}
\email{habib.ammari@math.ethz.ch}
\thanks{}

\author[S. Barandun]{Silvio Barandun\,\orcidlink{0000-0003-1499-4352}}
 \address{\parbox{\linewidth}{Silvio Barandun\\
 ETH Z\"urich, Department of Mathematics, Rämistrasse 101, 8092 Z\"urich, Switzerland, \href{http://orcid.org/0000-0003-1499-4352}{orcid.org/0000-0003-1499-4352}}.}
 \email{silvio.barandun@sam.math.ethz.ch}

\author[A. Uhlmann]{Alexander Uhlmann\,\orcidlink{0009-0002-0426-6407}}
 \address{\parbox{\linewidth}{Alexander Uhlmann\\
 ETH Z\"urich, Department of Mathematics, Rämistrasse 101, 8092 Z\"urich, Switzerland, \href{http://orcid.org/0009-0002-0426-6407}{orcid.org/0009-0002-0426-6407}}.}

\email{alexander.uhlmann@sam.math.ethz.ch}

\maketitle

\begin{abstract}
We elucidate the different mechanisms of wave localisation in disordered finite systems of subwavelength resonators, where the disorder is in the spatial arrangement of the resonators.
To do so, we employ the capacitance matrix formalism and develop a variety of tools to understand localisation in this setting. Namely, we adapt the Thouless criterion of localisation to quantify the various mechanisms of localisation. We also employ a propagation matrix approach to prove a Saxon-Hutner type theorem ensuring the existence of band gaps for disordered systems and to predict the corresponding Lyapunov exponents. We further investigate the stabilities of bandgaps and localised eigenmodes with respect to fluctuations in the system by introducing a method of defect mode prediction based on the discrete Green function and illuminating the phenomena of Anderson localisation and level repulsion under global disorder.
\end{abstract}

\bigskip

\noindent \textbf{Keywords.}  Disordered system, random block system, subwavelength localisation,  Saxon-Hutner type theorem, Lyapunov exponent, Thouless criterion, bandgap, hybridised bound eigenmode, defect eigenmode, banded $M$-matrix.\par

\bigskip

\noindent \textbf{AMS Subject classifications.} 35J05, 35C20, 35P20.
\\

\section{Introduction}

A subwavelength resonator is a highly contrasted bounded inclusion. The high contrast property gives rise to subwavelength resonances, which are frequencies at which the resonator strongly interacts with incident waves whose wavelengths can be larger by several orders of magnitude \cite{ammari.davies.ea2021Functional}. A typical example in acoustics of a subwavelength resonator is an air bubble in water, where the associated subwavelength resonance is called Minnaert resonance \cite{minnaert1933musical,ammari.fitzpatrick.ea2018Minnaert}. 

Systems of subwavelength resonators can be used to achieve wave localisation at subwavelength scales \cite{review1,review2,review3,sheng,yves.fleury.ea2017Crystalline,lemoult.kaina.ea2016Soda,lemoult.fink.ea2011Acoustic,cummer}. From \cite{feppon.cheng.ea2023Subwavelength,ammari.fitzpatrick.ea2018Minnaert}, we know that their spectral properties are (approximately in terms of the high contrast material parameter) described by the product of two matrices, $V$ and $C$. $V$ is a diagonal matrix where its diagonal terms are the material parameters of each resonator and $C$ is the so-called \emph{capacitance matrix} which contains only geometric information on the system of resonators. In this paper, we focus on subwavelength resonator systems in one dimension. In that case, the corresponding capacitance matrix is tridiagonal; see \eqref{def:C}.

When the system of resonators is periodic (\emph{i.e.},  translationally invariant), the generalised capacitance matrix has, depending on the number of resonators per unit cell, a Toeplitz or block Toeplitz structure perturbed at its edges; see \cite{ammari.davies.ea2021Functional, ammari.barandun.ea2023Edge}.  

The infinitely periodic case is then described by the quasiperiodic capacitance matrix   $C^\alpha$ defined by \eqref{eq:qpcapmat}, where quasiperiodic Bloch solutions and their respective band functions give a complete description of the essential spectrum. For increasing numbers of unit cells, the eigenvalues of the finite system pointwise converge to the essential spectrum of the corresponding Laurent operator; see \cite{ammari.davies.ea2023Spectral}. In other terms, any eigenvalue/eigenvector of $C^\alpha$ can be approximated by eigenvalues/eigenvectors of $C$. 
The existence of a bandgap for $C^\alpha$ shows the existence of a bandgap for $C$ when its size (the number $N$ of resonators) is large enough. Moreover,  the defect eigenmodes in infinite systems introduced by perturbing the material parameters of a fixed number of resonators (\emph{i.e.}, taking $V$ differing from the identity matrix on a fixed number of diagonal entries) have corresponding eigenmodes in finite systems which converge as the size of the system increases; see \cite{ammari.davies.ea2023Convergence}. 

On the one hand, the truncated Floquet--Bloch transform, introduced in \cite{ammari.davies.ea2023Spectral}, gives a concrete way to associate bandgaps and localised eigenmodes to systems of resonators that may be finite periodic or aperiodic (such as  Su--Schrieffer--Heeger (SSH) and dislocated chains).  The main idea is that, when the size of the finite structure is sufficiently large, the structure's eigenmodes are approximately a linear combination of Bloch eigenmodes of the corresponding infinite structure. To compare the discrete eigenvalues of the finite structure to the essential spectrum of the infinite periodic structure, one can reverse engineer the appropriate quasiperiodicities corresponding to these Bloch eigenmodes, taking into account the symmetries in the problem.  The discrete band function and the defect eigenmode calculations in \cite{ammari.davies.ea2023Convergence,ammari.davies.ea2023Spectral} provide a notion of how an eigenmode of the finite problem is approximated either by Bloch eigenmodes (delocalised eigenmodes) or defect modes (localised eigenmodes, corresponding to defect eigenfrequencies inside a bandgap) of the infinite structure. We refer to \cite{TFT} for the mathematical foundations of the Floquet--Bloch transform. 

Furthermore, as shown in \cite{ammari.davies.ea2024Anderson}, the scattering of waves by periodic systems of resonators where their material parameters are perturbed randomly (\emph{i.e.}, entries of $V$ are perturbed randomly) reproduces the characteristic features of Anderson localisation and illustrates strong subwavelength localisation. By Anderson localisation, we mean an increase in both the degree of localisation and the number of localised eigenmodes on average when we increase the random perturbation of the system. Moreover, it was also shown in \cite{ammari.davies.ea2024Anderson} that the hybridisation of subwavelength eigenmodes is responsible for both the repulsion of eigenfrequencies as well as the phase transition, at which point the eigenmode symmetries swap and a very strong localisation is possible.  

All of the recalled results obtained in the subwavelength regime heavily rely on the translation invariance properties of the system of resonators (or analogously the almost Toeplitz or block Toeplitz structure of $C$), together with the use of Floquet--Bloch theory for the analysis of the spectral properties of the corresponding infinite structure.

The aim of this work is to computationally elucidate wave localisation mechanisms in one-dimensional disordered systems. Our work can be seen as a first attempt to construct a consistent picture of how wave localisation occurs at subwavelength scales in disordered systems of subwavelength resonators. By disordered, we mean systems that are not translationally invariant. It is worth emphasising that here the disorder is in the spatial distribution of the resonators. We will show that localised eigenmodes can be created in these systems, under some conditions on the spatial distribution of the resonators, by randomly varying the material parameters of the resonators.  

Our first goal is to introduce the notions of bandgap and midgap eigenvalue for $C$ and to construct several disordered systems that exhibit these phenomena. Then, assuming that $C$ exhibits a bandgap, we show that by randomly perturbing the material parameters of the resonators, as in \cite{ammari.davies.ea2024Anderson}, we can evolve localised eigenmodes from extended eigenmodes and study how the localisation length and the portion of localised eigenmodes in the system vary as the amount of disorder is increased.  

As said above, while for finite periodic structures the associated eigenmodes are approximately linear combinations of Bloch eigenmodes of the corresponding infinite structure, this fact is, in general, not true for disordered systems. However, as first observed by Edwards and Thouless \cite{thoulessnumerical}, if we repeat the disordered structure periodically, then a localised eigenmode must be an eigenmode of the obtained periodic structure for any quasiperiodicity. In other words, by looking at the sensitivity of the eigenmodes to quasiperiodic boundary conditions, we can distinguish between localised and delocalised eigenmodes. While the delocalised eigenmodes of the disordered structure are sensitive to the periodisation of the structure, the localised ones are very insensitive. Conversely, if a Bloch band of the obtained periodic structure is flat (\emph{i.e.}, independent of the quasiperiodicity), then such an eigenfrequency corresponds either to a localised eigenmode or to one of the edges of the bandgap. By this reasoning, we introduce natural notions of localisation and bandgap in disordered structures. The strategy is as follows: by periodising the disordered finite structure and looking at the variation of the eigenfrequencies of the infinitely periodic structure as a function of the quasiperiodicity, we can detect band edges and consequently also the existence of a bandgap in the distribution of the eigenvalues of the capacitance matrix $C$. We may also detect localised eigenmodes inside such bandgaps.

A remarkable finding in this paper is that, depending on the disorder, 
dense regions of eigenvalues outside of the shared pass band may exist. These eigenvalues correspond to \emph{quite} localised eigenmodes (they are less localised than those in the bandgap) 
and accumulate as the number of resonators increases while the number of localised eigenmodes remains finite. Such an accumulation does not exist in periodic systems of subwavelength resonators. They also do not occur in dislocated chains of resonators or finite chains of the SSH type. In both cases, the structure is composed of two half-periodic systems having the same band structure. In some sense, to display these 
eigenmodes, disordered systems must differ significantly from periodic ones and break even local translation invariance. These eigenmodes, called \emph{hybridised bound eigenmodes}, can then occur for resonant frequencies that are partially supported in these systems, whereas defect modes that are supported nowhere are fully localised.

Finally, it is worth mentioning that our results in this paper generalise those obtained on the localisation of electrons in disordered lattices \cite{thouless1974Electrons,minami,molcanov1981local,mirani2} to classical systems of subwavelength resonators where, despite being in the low-frequency regime, the subwavelength resonances of the system interact strongly with the disorder. For similar results on the continuous Schr\"odinger model, we refer the reader to \cite{hislop} and the references therein. It should be noticed that the localisation mechanism for the continuous Schr\"odinger model, which is due to potential wells, is fundamentally different from the discrete models studied here. To our knowledge, the landscape localisation theory \cite{filoche,arnold.filoche.ea2022Landscape,lyra.mayboroda.ea2015Dual} is one of the most intriguing ways to detect localised eigenfunctions in the continuous Schr\"odinger model. An effective potential (known as a landscape function) finely governs the confinement of the localised eigenfunctions. In this picture, the boundaries of the localisation subregions for low-energy eigenfunctions correspond to the barriers of this effective potential, and the long-range exponential decay characteristic of localisation is explained as the consequence of multiple tunnelling in the dense network of barriers created by this effective potential. In \cite{bellis}, the localization landscape theory
is extended to classical scalar waves that exhibit high-frequency localisation.
In \cite{davies.lou2023Landscape}, a landscape theory for the generalised capacitance matrix is developed and used to predict wave localisation positions in random and aperiodic systems of subwavelength resonators. In \cite{tao}, it is shown that the landscape localisation theory is valid for all $M$-matrices. 

The paper is organised as follows. In \cref{sec:blockdisorder}, we first recall the capacitance matrix formulation of the resonance problem for a subwavelength resonator system. Then we introduce the notation and the construction associated with block disordered systems. In particular, we give three examples of disordered systems (SSH, disordered, and randomly sampled blocks), which can be understood in the block disordered framework. 
In \cref{sec: band gaps}, we introduce the Thouless criterion of localisation for the subwavelength setting by imposing quasiperiodic boundary conditions on the systems. 
Using this tool we investigate the localisation behaviour in disordered systems and uncover two distinct mechanisms of localisation: \emph{Bandgap localisation}, which stems from the bandgaps of the constituent parts and \emph{disordered localisation}, which occurs in very large disordered systems and is significantly weaker than the bandgap one. Both the bandgap localisation and disordered localisation can be tweaked, respectively, by either varying the bandgaps of the constituent blocks or their relative sampling densities. In \cref{sec:propagation}, we show how to predict bandgaps in block disordered systems by analysing their propagation matrices and use this framework to understand the localisation mechanisms in the different spectral regions. In particular, we investigate the mechanisms responsible for the occurrence of hybridised bound eigenmodes. We also relate the Lyapunov exponent of the total disordered system to those of its constituents to predict the decay of eigenmodes.
In \cref{sec:defects}, by randomly perturbing the material parameters of the resonators, we reproduce Anderson localisation features in disordered systems, which can be understood by considering the simple phenomena of \emph{hybridisation} and \emph{level repulsion}. In \cref{sec4:AL}, we investigate the stability and behaviour of the bandgaps and hybridised bound eigenfrequencies under random perturbations of the material parameters of all resonators. In \cref{sec:prediction}, we compare
the performance of the landscape function for the detection of localised defect modes in perturbed periodic systems with that in disordered systems where strongly localised bound eigenmodes occur. We also show that the behaviour of the fractal dimension related the eigenmodes of randomly perturbed disordered systems can be used to identify the mobility edges even in the presence of hybridised eigenmodes.
In \cref{sec:conclusion}, we summarise our main contributions in this paper and formulate some open questions. Finally, in \cref{appendixA}, we apply the methodology developed in this paper to arbitrary banded Hermitian $M$-matrices to detect their localised eigenvectors. 

\section{Setting}\label{sec:blockdisorder}
In this section, our aim is to provide an overview of the \emph{subwavelength} setting as well as to introduce the notation and construction associated with \emph{block disordered} systems of such subwavelength resonators.
\subsection{Subwavelength resonators}\label{sec:subwavelength setting}
The central systems of interest in this work are one-dimensional chains of subwavelength resonators (see \cite{ammari.davies.ea2021Functional,feppon.cheng.ea2023Subwavelength, cbms}). That is, we consider an array $\mc D=\bigsqcup_{i=1}^N (x_i^{\iL}, x_i^{\iR})$ consisting of $N$ resonators $D_i = (x_i^{\iL}, x_i^{\iR})$. We will denote $\ell_i = x_i^{\iR} - x_i^{\iL}$ for $1\leq i \leq N$ the \emph{length} of the $i$\textsuperscript{th} resonator and $s_i = x_{i+1}^{\iL}-x_{i}^{\iR}$ for $1\leq i \leq N-1$ the \emph{spacing} between the $i$\textsuperscript{th} and $(i+1)$\textsuperscript{th} resonator. 

The resonators are distinct from the background medium due to differing wave speeds and densities, given by
\begin{equation}
    v(x) = \begin{cases}
        v_i &\text{if } x\in D_i,\\
        v &\text{if } x\in \R\setminus \mc D, 
    \end{cases} \quad
    \rho(x) = \begin{cases}
        \rho_b &\text{if } x\in D_i,\\
        \rho &\text{if } x\in \R\setminus \mc D.
    \end{cases}
\end{equation}
After further imposing an outward radiation condition, we obtain the following coupled system of Helmholtz equations for the resonant modes (see \cite[(1.6)]{feppon.ammari2022Subwavelength}). The resonance problem is to find $\omega$ such that there is a nontrivial solution $u$ to

\begin{align}
\label{waveeq}
    \begin{dcases}
\ds \frac{\mathrm{d}^2}{\mathrm{d}x^2} u + \frac{\omega^2}{v^2} u = 0,  &\text{in }  
\R\setminus \mc D,\\
\ds \frac{\mathrm{d}^2}{\mathrm{d}x^2} u + \frac{\omega^2}{v_i^2} u = 0,  & \text{in }   D_i, i=1,\ldots, N,\\
 u\vert_{\iR}(x^{\iLR}_i) - u\vert_{\iL}(x^{\iLR}_i) = 0,                                                                 & \text{for } i=1, \ldots, N-1,\\
        \left.\frac{\dd u}{\dd x}\right\vert_{\iR}(x^{\iL}_i) - \frac{\rho_b}{\rho} \left.\frac{\dd u}{\dd x}\right\vert_{\iL}(x^{\iL}_i) = 0, &   \text{for } i=1, \ldots, N-1,          \\
        \left.\frac{\dd u}{\dd x}\right\vert_{\iL}(x^{\iR}_i) - \frac{\rho_b}{\rho} \left.\frac{\dd u}{\dd x}\right\vert_{\iR}(x^{\iR}_i) = 0, &   \text{for } i=1, \ldots, N-1,          \\
\bigg( \frac{\mathrm{d}}{\mathrm{d} |x|} - \i \frac{\omega}{v} \bigg) u(x) =0, & \text{for } |x| \text{ large enough,} 
\end{dcases}
\end{align}
where for a function $w$ we denote by 
\begin{align*}
    w\vert_{\iL}(x) \coloneqq \lim_{s \downarrow 0} w(x-s) \quad \mbox{and} \quad  w\vert_{\iR}(x) \coloneqq \lim_{s \downarrow 0} w(x+s)
\end{align*}
if the limits exist.

We are interested in \emph{subwavelength high-contrast regime}. Namely, we denote by $\delta = \rho_b / \rho$ the \emph{contrast parameter} and consider the resonant frequencies  $\omega(\delta)$ that satisfy
\begin{equation*}
    \omega(\delta) \to 0 \quad \text{ as } \quad \delta \to 0.
\end{equation*}

As shown in \cite{feppon.cheng.ea2023Subwavelength}, in this regime there exist exactly $N$ subwavelength resonant frequencies, characterised in leading order by a \emph{material matrix} $V$ that is diagonal and a tridiagonal \emph{capacitance matrix} $C$, respectively defined as
\begin{gather} 
    V = \diag \left(\frac{v_1^2}{\ell_1}, \dots, \frac{v_N^2}{\ell_N}\right) \in \R^{N\times N}, \label{def:v}
 \end{gather}
 and    
\begin{gather}
C = \left(\begin{array}{cccccc}
     \frac{1}{s_1}& -\frac{1}{s_1} \\
     -\frac{1}{s_1}& \frac{1}{s_1}+\frac{1}{s_2}& -\frac{1}{s_2} \\
     & -\frac{1}{s_2} & \frac{1}{s_2}+\frac{1}{s_3}& -\frac{1}{s_3}\\
     &&\ddots&\ddots&\ddots \\
     &&&-\frac{1}{s_{N-2}}& \frac{1}{s_{N-2}}+\frac{1}{s_{N-1}}& -\frac{1}{s_{N-1}}\\
     &&&&-\frac{1}{s_{N-1}}&\frac{1}{s_{N-1}}
\end{array}\right) \in \R^{N\times N}. \label{def:C}
\end{gather}
The following results are from \cite{feppon.cheng.ea2023Subwavelength}. 
\begin{theorem}\label{thm:capapprox}
    Consider a system consisting of $N$ subwavelength resonators in $\R$. Then, there exist exactly $N$ subwavelength resonant frequencies $\omega(\delta)$ that satisfy $\omega(\delta)\to 0$ as $\delta\to 0$. Moreover, the $N$ resonant frequencies are approximately given by 
    \[
        \omega_i(\delta) = \sqrt{\delta\lambda_i} + \mc O (\delta),
    \]
    where $(\lambda_i)_{1\leq i\leq N}$ are the eigenvalues of the eigenvalue problem
    \[
        VC\bm u_i = \lambda_i \bm u_i.
    \]
    Furthermore, the corresponding resonant modes $u_i(x)$ are approximately given by 
    \[
        u_i(x) = \sum_{j=1}^N \bm u_i^{(j)}V_j(x) + \mc O (\delta),
    \]
    where $\bm u_i^{(j)}$ is the $j$\textsuperscript{th} entry of the vector $\bm u_i$ and $V_j(x)$, $j=1,\ldots, N$, are defined by 
    \begin{align*}
        \begin{dcases}
          -\frac{\dd{^2}}{\dd x^2} V_j(x) =0, & x\in\R\setminus \mc D, \\
          V_j(x)=\delta_{ij},              & x\in D_i,  i=1,\ldots, N,                         \\
          V_j(x) = \BO(1),                  & \mbox{as } \abs{x}\to\infty.
        \end{dcases}
        \label{eq: def V_i}
    \end{align*}
\end{theorem}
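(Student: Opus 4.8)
The plan is to exploit the fully explicit one-dimensional structure, where on each interval the Helmholtz equation is solved by elementary exponentials, and to track the asymptotics as $\delta\to 0$ directly from the transmission conditions in \eqref{waveeq}. First I would dispose of the exterior by solving the equation on the two half-lines $(-\infty,x_1^{\iL})$ and $(x_N^{\iR},+\infty)$: the outgoing radiation condition selects the solutions $e^{\pm\i\omega x/v}$, whose logarithmic derivatives at the extreme endpoints are $\mc{O}(\omega)=\mc{O}(\sqrt\delta)$. This closes \eqref{waveeq} into a transmission problem on the finite union $\mc D$ via two Dirichlet-to-Neumann conditions at the ends, and, crucially, shows that the exterior slopes felt by the boundary resonators are $\mc{O}(\sqrt\delta)$.

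The heart of the matter is an exact flux balance on each resonator. The two flux conditions say precisely that at every endpoint the \emph{interior} derivative equals $\delta$ times the \emph{exterior} (gap) derivative; since the gap slopes are $\mc{O}(1)$, the interior derivatives are $\mc{O}(\delta)$ and $u$ is constant on $D_i$ to leading order, $u\approx\bm u^{(i)}$. Integrating $\tfrac{\dd^2}{\dd x^2}u=-\tfrac{\omega^2}{v_i^2}u$ over $D_i$ and substituting the flux conditions then gives the exact relation
\[
    \delta\bigl(\sigma_i^{\iR}-\sigma_i^{\iL}\bigr)=-\frac{\omega^2}{v_i^2}\int_{D_i}u ,
\]
where $\sigma_i^{\iR},\sigma_i^{\iL}$ are the outer slopes in the gaps flanking $D_i$. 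Because continuity forces $u$ to interpolate the values $\bm u^{(i)},\bm u^{(i+1)}$ across a gap, and $u''=\mc{O}(\delta)$ there, the profile is affine to leading order, so $\sigma_i^{\iR}=(\bm u^{(i+1)}-\bm u^{(i)})/s_i$ and $\sigma_i^{\iL}=(\bm u^{(i)}-\bm u^{(i-1)})/s_{i-1}$, whence
\[
    \sigma_i^{\iR}-\sigma_i^{\iL}=\frac{\bm u^{(i+1)}-\bm u^{(i)}}{s_i}-\frac{\bm u^{(i)}-\bm u^{(i-1)}}{s_{i-1}}=-[C\bm u]_i
\]
with $C$ exactly the tridiagonal matrix \eqref{def:C}. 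Multiplying the $i$-th balance by $v_i^2/\ell_i$ and collecting the $N$ equations yields $\delta VC\bm u=\omega^2\bm u+\mc{O}(\delta^{3/2})$, i.e. $\omega^2=\delta\lambda+\mc{O}(\delta^{3/2})$ for an eigenvalue $\lambda$ of $VC$. The edge rows of $C$ lack a second gap term precisely because the missing adjacent gap is replaced by a half-line whose slope is $\mc{O}(\sqrt\delta)$; multiplied by the prefactor $\delta$ this contributes only at order $\delta^{3/2}$. The eigenmode claim is then immediate: the leading-order field is constant on each resonator and affine across each gap, which is exactly $\sum_j\bm u^{(j)}V_j(x)$ for the piecewise-linear $V_j$ of the statement.

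The step I expect to be delicate is upgrading this formal balance to the \emph{existence of exactly $N$} subwavelength resonances with a rigorous $\mc{O}(\delta)$ remainder, and the absence of others with $\omega\to 0$. The clean route is to eliminate all interior amplitudes against the two end conditions, producing an analytic characteristic function $\mathcal A(\omega,\delta)$ whose zeros are the resonances; rescaling $\omega=\sqrt\delta\,z$ turns $\mathcal A$ into a function whose $\delta\to0$ limit equals, up to a nonvanishing factor, $\det(VC-z^2 I)$. A Rouché (or Gohberg--Sigal) argument on a fixed contour in the $z$-plane counts exactly $N$ roots $z_i^2\to\lambda_i$, the implicit function theorem promotes each to $\omega_i(\delta)=\sqrt{\delta\lambda_i}+\mc{O}(\delta)$, and enlarging the contour rules out further small resonances. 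The two technical points to watch are the uniform control of the exterior Dirichlet-to-Neumann contributions over the relevant $\omega$-range, and the spectral nondegeneracy needed for the count; the latter is guaranteed because $VC$ is similar to the symmetric positive semidefinite matrix $V^{1/2}CV^{1/2}$ (its quadratic form being $\sum_i (x_i-x_{i+1})^2/s_i$), so $VC$ has real nonnegative eigenvalues and a full eigenbasis.
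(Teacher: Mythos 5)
The paper does not actually prove this theorem: it is imported verbatim from \cite{feppon.cheng.ea2023Subwavelength} (``The following results are from \ldots''), so there is no in-paper argument to diverge from. Your proposal correctly reconstructs the proof strategy of that cited source --- exploiting the explicit one-dimensional exponential solutions and exterior Dirichlet-to-Neumann data of order $\mc O(\sqrt{\delta})$, the flux balance $\delta(\sigma_i^{\iR}-\sigma_i^{\iL})=-\tfrac{\omega^2}{v_i^2}\int_{D_i}u$ producing $\omega^2\bm u=\delta VC\bm u$ to leading order with exactly the tridiagonal $C$ of \eqref{def:C}, and a Rouch\'e/Gohberg--Sigal count after the rescaling $\omega=\sqrt{\delta}\,z$, with nondegeneracy from the similarity of $VC$ to $V^{1/2}CV^{1/2}$ --- so it is essentially the same approach as the original proof the paper relies on.
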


We can thus fully understand the subwavelength resonant modes by studying the eigenvalue problem for the \emph{generalised capacitance matrix} $\mc C \coloneqq VC$ and will often use $\lambda_i$ and $\omega_i$ interchangeably, where it is clear from context.

\subsection{Block disordered systems}
As we are interested in the band structure of disordered systems, where local translation invariance is broken, we must develop ways to construct and describe such systems. One such way is to consider a finite number of distinct \enquote{building blocks} consisting of (possibly multiple) subwavelength resonators. Later, constructing disordered resonator arrays from simple building blocks will enable us to characterise the properties of the array by looking at the building blocks.

A subwavelength block disordered system is a chain of subwavelength resonators consisting of $M$ blocks of resonators $B_{\oo{j}}$ sampled accordingly to a sequence $\chi \in \{1,\dots , D\}^M \eqqcolon \ldm$ from a selection $B_1, \dots, B_D$ of $D$ distinct resonator blocks, arranged on a line. Every resonator block $B_j$ is characterised by a sequence of tuples $(v_1,\ell_1,s_1), \dots, (v_{\len(B_j)},\ell_{\len(B_j)},s_{\len(B_j)})$ that denote the wave speed, length, and spacing of each constituent resonator. Here, $\len(B_j)$ denotes the total amount of resonators contained within the block $B_j$.

We will often abuse notation and write $\mc D = \bigcup_{j=1}^M B_{\oo{j}}$ to denote the resonator array constructed by sampling the blocks $B_1, \dots B_D$ in accordingly to $\chi\in \ldm$ and then arranging them in a line. Having thus constructed an array of subwavelength resonators, we can alternatively write that $\mc D = \bigcup_{i=1}^N D_i$ in line with \cref{sec:subwavelength setting}.
Note that as $M$ denotes the total number of sampled blocks and $N$ the total number of resonators, we always have $M\leq N$.

\begin{example}\label{ex:standard_blocks}
    A simple but nontrivial disordered system is obtained by sampling from a set of two blocks $B_1$ and $B_2$.
    $B_1=B_{single}$ denotes a single resonator block with $\len(B_{single}) = 1$ and $\ell_1(B_{single}) = s_1(B_{single})  = 2$ while $B_2=B_{dimer}$ is a dimer resonator block with $\len(B_{dimer}) = 2$ and $\ell_1(B_{dimer}) = \ell_2(B_{dimer})=1$ and $s_1(B_{dimer})=1, s_2(B_{dimer})=2$. We choose all wave speeds to be equal to $1$.
    An example of a single realisation  of this system with $M=14$ is described by the following sequence:
    \[
        \chi = (1,1,1,1,2,1,1,1,1,1,2,1,1,1) \in \{1,2\}^M . 
    \]
    In \cref{fig:disorderedsketch}, another realisation corresponding to the sequence $\chi = (1,2,1)$ is illustrated.
\end{example}

Notably, when repeated periodically, both the single resonator and the dimer share a lower band. However, as will be shown in \Cref{sec:propagation}, the dimer possesses an additional upper band and the distinct properties of these building blocks affect the resonant frequencies of the resulting array. 

\begin{figure}
    \centering
    \begin{tikzpicture}[scale=1.0, every node/.style={font=\footnotesize}]

    \draw[thick] (0,0) -- (2,0);
    \node[below] at (1,0) {$D_1$};
    
    \draw[
    decorate,
    decoration={brace, mirror, amplitude=5pt}
    ] (0,-0.35) -- (4,-0.4)
    node[midway, yshift=-0.45cm] {$B_1$};
    
    \draw[-, dotted] (0,0) -- (0,0.7);
    \draw[-, dotted] (2,0) -- (2,0.7);
    \draw[|-|, dashed] (0,0.7) -- node[above]{$\ell_1(B_1)$} (2,0.7);
    
    \draw[-, dotted] (4,0) -- (4,0.7);
    \draw[|-|, dashed] (2,0.7) -- node[above]{$s_1(B_1)$} (4,0.7);
    
    \draw[thick] (4,0) -- (5,0);
    \node[below] at (4.5,0) {$D_2$};
    \draw[thick] (6,0) -- (7,0);
    \node[below] at (6.5,0) {$D_3$};
    
    \draw[
    decorate,
    decoration={brace, mirror, amplitude=5pt}
    ] (4,-0.4) -- (9,-0.4)
    node[midway, yshift=-0.45cm] {$B_2$};
    
    \draw[-, dotted] (5,0) -- (5,0.7);
    \draw[-, dotted] (6,0) -- (6,0.7);
    \draw[-, dotted] (7,0) -- (7,0.7);
    \draw[-, dotted] (9,0) -- (9,0.7);
    \draw[|-|, dashed] (4,0.7) -- node[above]{$\ell_1(B_2)$} (5,0.7);
    \draw[|-|, dashed] (5,0.7) -- node[above]{$s_1(B_2)$} (6,0.7);
    \draw[|-|, dashed] (6,0.7) -- node[above]{$\ell_2(B_2)$} (7,0.7);
    \draw[|-|, dashed] (7,0.7) -- node[above]{$s_2(B_2)$} (9,0.7);

    \begin{scope}[shift={(9,0)}]
    \draw[thick] (0,0) -- (2,0);
    \node[below] at (1,0) {$D_4$};
    \draw[
    decorate,
    decoration={brace, mirror, amplitude=5pt}
    ] (0,-0.4) -- (4,-0.4)
    node[midway, yshift=-0.45cm] {$B_1$};
    
    \draw[-, dotted] (0,0) -- (0,0.7);
    \draw[-, dotted] (2,0) -- (2,0.7);
    \draw[|-|, dashed] (0,0.7) -- node[above]{$\ell_1(B_1)$} (2,0.7);
    
    \draw[-, dotted] (4,0) -- (4,0.7);
    \draw[|-|, dashed] (2,0.7) -- node[above]{$s_1(B_1)$} (4,0.7);
    \end{scope}
\end{tikzpicture}
    \caption{A block disordered system consisting of two single resonator blocks $B_1$ and a dimer block $B_2$ arranged in a chain given by the sequence $\chi= (1,2,1)$. It thus consists of $M=3$ blocks and $N=4$ resonators $D_1,\dots ,D_4$ in total.}
    \label{fig:disorderedsketch}
\end{figure}

\subsection{Arrangements}\label{sec:blocksystems}
We consider three families of disordered systems, all of which can be understood in the block disordered framework:
\begin{description}
    \item[SSH] Inspired by the SSH model and its quantum-mechanical analogue, this model is composed of two periodic systems with different unit cells joined together. In \cite{ammari.barandun.ea2024Exponentially}, this system is studied for systems of subwavelength resonators;
    \item[Dislocated] This model is composed of an array of dimers, one of which gets \emph{dislocated} by increasing the intra-resonator distance. We point to \cite{ammari.davies.ea2022Robust} for a study of this model;
    \item[Randomly sampled blocks] This model consists of two distinct resonator blocks $B_1$ and $B_2$ sampled $M$-times independently and identically with probability $p_1$ and $p_2=1-p_1$. Although we will usually consider $B_1$ and $B_2$ as in \cref{ex:standard_blocks}, we will also sometimes consider two distinct dimer blocks, partially sharing the upper pass band (see \cref{fig: var and eigs prop matrix}(\textsc{b})).
\end{description}
The SSH and dislocated systems can be understood as block disordered systems consisting of a dimer block $B_2$ (we will select the same dimer block as in \cref{ex:standard_blocks}) and some defect block $B_1$. Their corresponding block sequences are respectively given by $\chi = (\dots, 2,2,1,2,2, \dots)\in \ldm$. 

The family of systems consisting of randomly sampled disordered blocks is more strongly disordered than both the SSH and dislocated systems, as even local translation invariance is violated. The eigenmodes are thus no longer given (at least locally) by Bloch eigenmodes, making it impossible to obtain any classical dispersion relation. Consequently, numerical methods such as the truncated Floquet--Bloch transform, which allows the recovery of the band structure for the SSH and dislocated systems \cite{TFT}, fail to recover anything in the disordered block case.

However, while strongly disordered systems might no longer exhibit the full band structure of their periodic counterparts, we can still observe the familiar notions of band regions and bandgaps. Furthermore, by controlling the sampling densities $p_1$ and $p_2$, we can precisely control the amount of disorder introduced into the system. It is worth emphasising that a crucial consequence of this random block sampling is that, as the number of sampled blocks $M$ goes to infinity, any local arrangement of blocks must occur infinitely often almost surely.
 
\section{Bandgaps in disordered systems and the Thouless criterion} \label{sec: band gaps}
It is well known that the subwavelength part of the spectrum of periodic systems of subwavelength resonators where there are $N$ resonators per unit cell is given by the union of the first $N$ band functions as a result of  Floquet--Bloch's theorem \cite{ammari.davies.ea2021Functional,cbms,jde}. Recent results obtained in \cite{ammari.davies.ea2023Spectral,TFT} have shown that 
the subwavelength part of the spectrum of the infinite system can be approximated by the subwavelength eigenfrequencies associated with large size, finite systems. Furthermore, the boundary of the essential spectrum gives the mobility edges discriminating between localised and delocalised eigenmodes.

This theory does not apply to disordered systems, but the distinction and prediction of whether a frequency corresponds to a localised or a delocalised eigenmode are still highly relevant. To do that, we generalise the idea introduced in \cite{thoulessnumerical} where the sensitivity of the eigenvalues to the choice of periodic or antiperiodic boundary conditions was used as a criterion for localisation, as follows. The main difference between a localised and a delocalised eigenmode is its magnitude at the edges of the system: While a localised eigenmode is \emph{exponentially} small at the edges, a delocalised eigenmode is typically of the order $\mc O (1/\sqrt{N})$ there. Consequently, localised eigenmodes are not sensitive to quasiperiodic boundary conditions, while delocalised eigenmodes are. It is therefore natural to consider the system with quasiperiodic boundary conditions at its edges. 

\subsection{Quasiperiodic boundary conditions and the Thouless criterion}

In line with the intuition developed above, given some array of $N$ subwavelength resonators described by a generalised capacitance matrix $\mc C\in \R^{N\times N}$, we aim to measure the sensitivity of the $N$ eigenfrequencies $\lambda_1<\dots <\lambda_N$\footnote{Note that $\mc C$ is diagonalisable because it is similar to a Hermitian matrix, and has only simple eigenvalues because it is tridiagonal.} to quasiperiodic boundary conditions. To that end, we take the finite resonator array and periodise it by copying it infinitely often in both directions. As a consequence, we obtain an infinitely periodic array consisting of a unit cell containing $N$ resonators. Using Floquet-Bloch theory, as for example in \cite{ammari.barandun.ea2023Edge}, we find that the spectrum of the periodised system is given by $N$ bands $\lambda_1(\alpha),\dots, \lambda_N(\alpha)$ for some quasimomentum $\alpha$ in the first Brillouin zone $Y^* \coloneqq[-\pi/\iL,\pi/\iL]$. Here, $\iL$ denotes the length of the unit cell, which, by construction, corresponds to the length of the original resonator array.

Following \cite{ammari.barandun.ea2023Edge}, we can consider the \emph{quasiperiodic capacitance matrix} 
\begin{gather}\label{eq:qpcapmat}
    C^\alpha = \left(\begin{array}{cccccc}
         \frac{1}{s_N}+\frac{1}{s_1}& -\frac{1}{s_1}&&&& - \frac{e^{-\i\alpha\iL}}{s_N} \\
         -\frac{1}{s_1}& \frac{1}{s_1}+\frac{1}{s_2}& -\frac{1}{s_2} \\
         & -\frac{1}{s_2} & \frac{1}{s_2}+\frac{1}{s_3}& -\frac{1}{s_3}\\
         &&\ddots&\ddots&\ddots \\
         &&&-\frac{1}{s_{N-2}}& \frac{1}{s_{N-2}}+\frac{1}{s_{N-1}}& -\frac{1}{s_{N-1}}\\
         - \frac{e^{\i\alpha\iL}}{s_N}&&&&-\frac{1}{s_{N-1}}&\frac{1}{s_{N-1}}+\frac{1}{s_N}
    \end{array}\right) \in \mathbb{C}^{N\times N},
\end{gather}
and find the bands functions at $\alpha$ to be the eigenvalues of the \emph{generalised quasiperiodic capacitance matrix} $\mc C^\alpha \coloneqq VC^\alpha$. We note that $\mc C^\alpha$ is again similar to a Hermitian matrix and has only one-dimensional eigenspaces, for any $\alpha\in Y^*$.
As a consequence, the band functions cannot intersect and the ordering $\lambda_1(\alpha)<\dots < \lambda_N(\alpha)$ holds for any $\alpha\in Y^*$ allowing us to uniquely associate each band function $\alpha\mapsto \lambda_i(\alpha)$ with the corresponding eigenvalue $\lambda_i$ of the finite system, consisting of $N$ resonators.

However, there is one slight subtlety in imposing these quasiperiodic boundary conditions. Namely, as can be seen in \cref{eq:qpcapmat}, we need to choose the spacing $s_N$ corresponding to the separation with which the unit cells are arranged during periodisation. Although choosing an arbitrary $s_N$ that is roughly of order $s_1,\dots s_{N-1}$ does not introduce a too large error (especially if $N$ is large), we can do better in the block disordered case.
When arranging the blocks $B_1, \dots B_D$ in a linear array according to some sequence $\chi \in \ldm$, the spacing of the last resonator of the final block is effectively discarded because there are no more resonators to its right anyway. However, when periodising, this is the natural candidate for $s_N$, ensuring that the periodised array also contains only blocks in $B_1, \dots, B_N$. In particular, we choose
\begin{equation}\label{eq:sn}
    s_N \coloneqq s_{\len (B_{\oo{M}})} (B_{\oo{M}}).
\end{equation}

Having discussed how to find the bands, we can now estimate the \emph{level shifts} defined as the average shift in the band function frequency
\begin{equation}
    \delta\lambda_i \coloneqq \frac{1}{2\pi\iL}\int_{\alpha\in Y^*} \abs{\lambda_i(\alpha)-\lambda_i(0)}d\alpha .
\end{equation}

In practice, because of non-degeneracy of the band functions (as illustrated in \cref{fig:band_function_qpbc}), it is often sufficient to only consider periodic and anti-periodic boundary conditions (\emph{i.e.}, $\alpha\in \{0,\pi/\iL\}$) to capture the scale of the level shift
\begin{equation}\label{eq:energy_shift_approx}
    \delta\lambda_i \approx \abs{\lambda_i(\pi/\iL)-\lambda_i(0)}.
\end{equation}

The \emph{Thouless ratio} $g(\lambda_i)$ for a given frequency $\lambda_i$ is then defined as the ratio of the level shift over the \emph{mean level spacing} $\Delta(\lambda_i)$ around that frequency. Namely,
\begin{equation}\label{eq:thoulessratio}
    g(\lambda_i) \coloneqq \frac{\delta \lambda_i}{\Delta(\lambda_i)} \quad \text{with} \quad \Delta(\lambda_i) = \frac{1}{D(\lambda_i)\iL},
\end{equation}
{
where $D(\lambda)$ denotes the \emph{estimated density of states}\footnote{Intuitively, the density of states at $\lambda$ is the \emph{expected} amount of eigenvalues in the interval $[\lambda - \frac{\mathrm{d}\lambda}{2}, \lambda + \frac{\mathrm{d}\lambda}{2}]$.} at $\lambda$ obtained by a Gaussian kernel density estimate from the empirical eigenvalues $\lambda_1, \dots, \lambda_N$. That is, 
\begin{equation}
 D(\lambda) \coloneqq \frac{1}{\sigma h N\sqrt{2\pi}}\sum_{i=1}^N \exp\left(-\frac{(\lambda-\lambda_i)^2}{2h^2\sigma^2}\right),
\end{equation}
where $\sigma$ denotes the empirical standard deviation of the eigenvalues $\lambda_1, \dots, \lambda_N$ and $h>0$ denotes the \emph{bandwidth} and acts as a smoothing parameter. In practice, we choose $h=10^{-2}$.}

Intuitively, the Thouless ratio for a given eigenfrequency $\lambda_i$ as calculated in \cref{eq:thoulessratio} measures the sensitivity of the resonant frequencies to boundary conditions, normalised by the amount of eigenfrequencies close by. This normalisation makes sense because, as noted above, the band functions cannot cross and are thus \enquote{sandwiched} between the surrounding bands, limiting their variation proportionally.

Following \cite{thoulessnumerical, thouless1974Electrons}, our aim is to relate the Thouless ratio $g(\lambda_i)$ of the eigenfrequencies $\lambda_i$ of $\mc C$ to the localisation behaviour of the associated eigenmodes $\bm u_i$. In particular, we expect that
\begin{equation}\label{eq:thoulesscriterion}
    \bm u_i \text{ is } \begin{cases}
        \text{delocalised} & \text{ if } g(\lambda_i) \approx 1,\\
         \text{localised} & \text{ if } g(\lambda_i) \ll 1.
    \end{cases}
\end{equation}
This constitutes the \emph{Thouless criterion of localisation} applied to the disordered systems of subwavelength resonators.

Note that because in one dimension the band functions cannot intersect, we can never achieve $\delta\lambda_i \gg \Delta(\lambda_i)$ (or equivalently $g(\lambda_i)\gg 1$). This makes \cref{eq:thoulesscriterion} a complete characterisation.

\subsection{Thouless criterion and band structure for block disordered systems}
\begin{figure}[h]
    \centering
    \includegraphics[width=\textwidth]{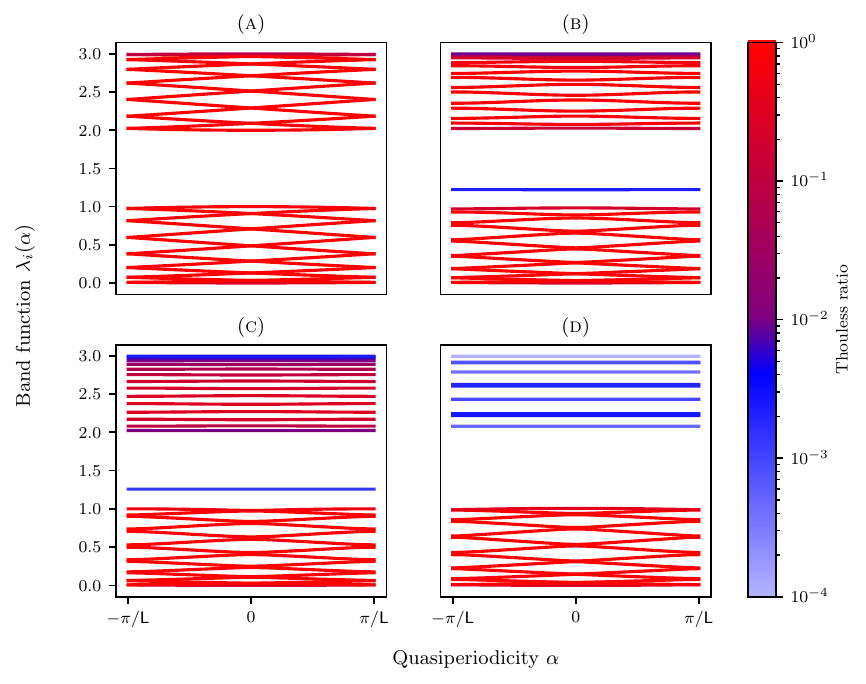}
    \caption{The band functions when imposing quasiperiodic boundary conditions to a (\textsc{a}) periodic, (\textsc{b}) SSH, (\textsc{c}) dislocated, and  (\textsc{d}) random block system. For each band, the colour indicates its Thouless criterion for that band. (\textsc{a}) consists of the dimer block $B_1$ and (\textsc{b-d}) are constructed as in \cref{sec:blocksystems}.
    The interval $(1,2)$ seems to act as a bandgap for all systems. The systems (\textsc{b}) and (\textsc{c}) have a defect midgap eigenmode (in light blue) while system (\textsc{d}) has an accumulation of near-to-zero-variance bands.
    }
    \label{fig:band_function_qpbc}
\end{figure}
To illuminate this connection, we evaluate the Thouless criterion for the block disordered settings introduced in the previous section. 
In \cref{fig:band_function_qpbc}, we show that the Bloch band functions for a (\textsc{a}) periodic, (\textsc{b}) SSH, (\textsc{c}) dislocated and (\textsc{d}) disordered block structure, obtained by imposing quasiperiodic boundary conditions on the respective finite systems, as discussed in the previous subsection. 

In \cref{fig:band_function_qpbc}, we plot the band functions and the Thouless ratio of these systems and observe the following:
\begin{enumerate}[label=(\roman*)]
    \item In all cases, there appears to be a bandgap about the interval $(1,2)$ surrounded by bands $(0,1)$ and $(2,3)$. This reflects the band structure of the dimer building block $B_1$ as in \cref{ex:standard_blocks}; 
    \item For the two defected block disordered systems (\textsc{b}-\textsc{c}) we can observe a band belonging to the defect mode lying in the bandgap. This band is essentially flat, which reflects the fact that the corresponding defect mode is localised and hence insensitive to boundary conditions;
    \item As the band functions get closer to the bandgap, they become more densely spaced and their variation decreases as a result. This observation supports the introduction of the Thouless criterion $g(\lambda_i)$ as the band variation, normalised by the mean level spacing $\Delta(\lambda_i)$ as a measure of the sensitivity to boundary conditions;
    \item The band functions are symmetric about $\alpha=0$ and monotonic in $[0,\pi/\iL]$. This supports the conclusion that \cref{eq:energy_shift_approx} is a valid approximation for the level shift $\delta\lambda_i$. 
    \item Finally, in (\textsc{d}) we observe that flat bands need not be isolated and may appear in dense regions.
\end{enumerate}

Physically, a periodised structure cannot have a perfectly flat Bloch band function, as Floquet--Bloch's theorem would then imply a non-empty point spectrum, which is impossible because of the translation invariance property of the system. Nevertheless, \cref{fig:band_function_qpbc}(\textsc{b-c}) shows that numerically flat bands are associated with localised eigenmodes in SSH and dislocated systems.

\subsection{Characterisation of the localisation}
A natural question that arises at this point is whether or not a low-band function variation also indicates a high localisation.  To determine the localisation of the eigenvectors $\bm u_i \in \C^N$ of $\mathcal{C}$, we calculate their \emph{inverse participation ratios} defined by $\norm{\bm u_i}_4/\norm{\bm u_i}_2$, which take values between $1/\sqrt[4]{N}$ (for completely delocalised eigenvectors) and $1$ (for completely localised ones). In \cref{fig:var_vs_loc}, for all four cases presented in \cref{fig:band_function_qpbc}, we calculate their respective eigenvalues $\lambda_i$ and compare the inverse participation ratios of the corresponding eigenvectors $\bm u_i$ with the Thouless ratio $g(\lambda_i)$ obtained by periodisation. We can see excellent agreement between the two measures as the Thouless ratio is able to accurately identify the localised eigenvectors\footnote{With one limitation: Towards the band edges the Gaussian kernel density estimate tends to underestimate the density of states $D(\lambda)$ due to its discontinuity at the band edge, leading to an erroneously low Thouless ratio. However, this occurs only for a small and fixed amount of eigenvalues and therefore, as the system size $N$ increases, the fraction of affected eigenvalues goes to $0$.}. 

\begin{figure}[h]
    \centering
    \includegraphics[width=\textwidth]{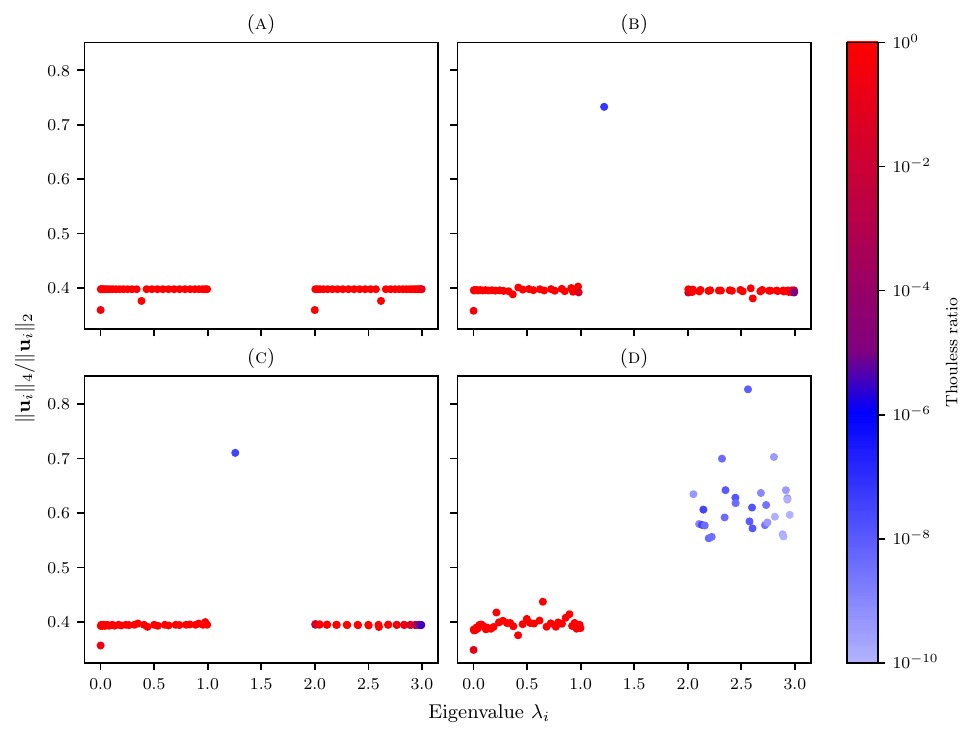}
    \caption{Inverse participation ratio and Thouless ratio for all eigenvalues $\lambda_i$ for the same systems as in \cref{fig:band_function_qpbc}. We can see that the Thouless criterion accurately identifies all localised eigenmodes.} 
    \label{fig:var_vs_loc}
\end{figure}

{
There is an intuitive explanation for the localisation observed in \cref{fig:var_vs_loc}(\textsc{b-d}), that is, \emph{bandgap localisation}. Namely, the modes which are localised in these systems are localised because they lie in the bandgap of one of the constituent blocks. For the two defected systems (\textsc{b}) and (\textsc{c}), the defect blocks in the middle induce a defect mode which lies inside the bandgap $(2,3)$ of the surrounding dimer blocks. This causes the defect mode to be exponentially localised, since the dimer blocks make up all but one of the blocks of the respective systems. 

For system (\textsc{d}) consisting of randomly sampled blocks of either single resonators or dimers the story is similar: Every dimer block introduces two eigenmodes, one is lying in its lower band $(0,1)$ and another in its upper band $(2,3)$, causing the upper band $(2,3)$ to fill up as well. However, every mode in the dimer upper band lies in the bandgap of the single resonator. Due to random sampling, an upper dimer eigenmode must pass through arbitrarily many single resonators, leading to exponential localisation.

In \cref{sec:propagation}, we will investigate these intuitions using a propagation matrix approach.

\begin{figure}[h]
    \centering
    \includegraphics[width=\textwidth]{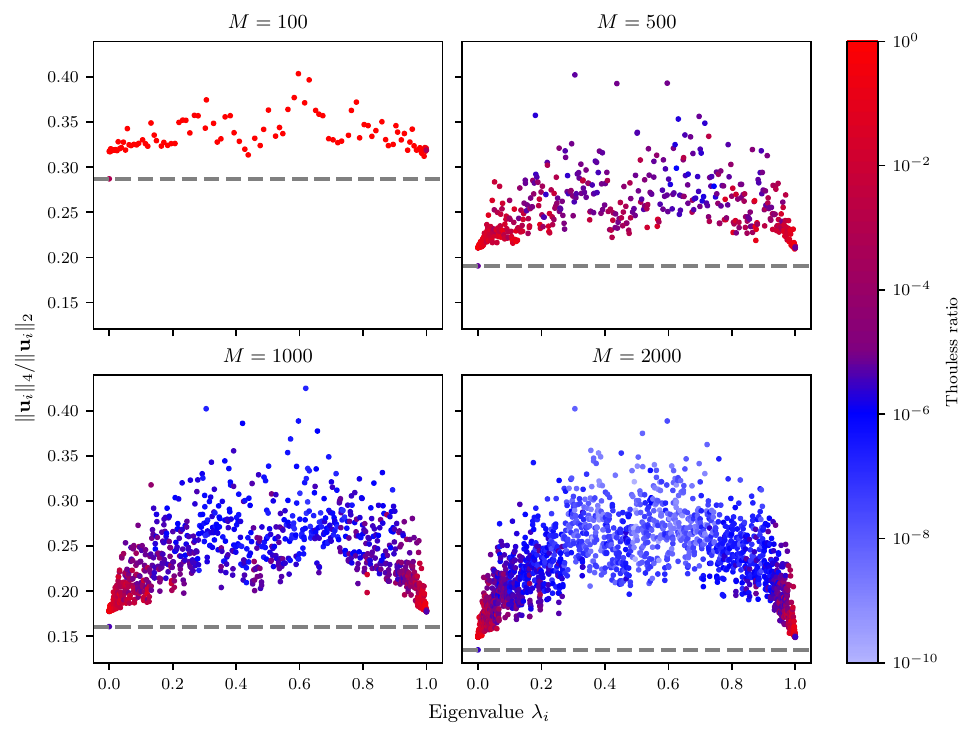}
    \caption{Inverse participation ratio and Thouless ratio for all eigenvalues $\lambda_i$ in the lower band $(0,1)$ for randomly sampled block disordered systems (constructed as in \cref{ex:standard_blocks}) of increasing size. The gray dashed line marks the lowest possible IPR ($1/\sqrt[4]{N}$) for completely delocalised eigenmodes.
    We can see that the modes away from the band edges become increasingly localised as the system size is increased. The Thouless criterion accurately captures this behaviour.} 
    \label{fig:var_vs_loc_increasing size}
\end{figure}
However, there also exists another localisation mechanism, that is, \emph{disorder localisation}. Although it is usually significantly weaker than bandgap localisation, it is inescapable for one-dimensional random disordered systems. Intuitively, the disordered arrangement of the blocks causes any mode (even those supported in both blocks) to be internally reflected an arbitrary amount of times. As a consequence, for very large system sizes $M\approx 1000$, even modes which are supported (\emph{i.e.}, lie in the pass band) of every constituent block become exponentially localised. This is illustrated in \cref{fig:var_vs_loc_increasing size} where we observe that for a random block system consisting of single resonator and dimer blocks as in \cref{ex:standard_blocks} modes in the lower band $(0,1)$ become localised due to disorder. We further observe that the Thouless criterion accurately captures this localisation as well. 

Random block disordered systems offer an excellent setting to study both types of localisation, as we can easily tweak both types of localisation: Bandgap localisation by varying the bandgaps of the constituent blocks and disorder localisation by varying the relative sampling densities of the constituent blocks. In particular, for a $D$-block system, we can maximise the disorder by choosing $p_1=\dots= p_D = 1/D$, where $D$ is the number of constituent blocks.
}

\section{Propagation matrix analysis}\label{sec:propagation}
{
In this section, we describe the spectral properties, most importantly bandgaps, of arbitrary block disordered systems by understanding the propagation properties of their constituent blocks. This type of argument has a long tradition dating back to Saxon and Hutner \cite{saxonhutner} and is especially fruitful in the one-dimensional setting, where we can use the powerful transfer matrix approach. In particular, we will employ \cite{hori.matsuda1964Structure,hori1964Structure} to demonstrate that frequencies that lie in the bandgap of all constituent blocks must also lie in the bandgap of the total system.
}

\subsection{Transfer and propagation matrices}
{
Given an arbitrary block disordered system $\mc D = \bigcup_{j=1}^M B_{\oo{j}} = \bigcup_{i=1}^ND_i$ consisting of blocks $B_1, \dots B_D$ arranged in a sequence $\chi\in \ldm$ and described by a generalised capacitance matrix $\mc C = VC$, our aim is to identify gaps in the spectrum $\sigma(\mc C)$. To that end, we begin by introducing the \emph{transfer matrices}\footnote{Note that for this section, we will assume the wave speeds $v_j=1$ for all resonators.}
\begin{equation}\label{eq:transfermat}
    T_i(\lambda) \coloneqq \left(\begin{array}{cc}
         1+\frac{s_i}{s_{i-1}}-\ell_is_i \lambda& -\frac{s_i}{s_{i-1}} \\
         1& 0
    \end{array}\right),
\end{equation}
for $1\leq i \leq N$ and $\lambda\in \R$. We choose $s_0\coloneqq s_N$, where $s_N$ is given due to the block disordered construction as in \cref{eq:sn}.

The matrix $T_i(\lambda)$ is called the transfer matrix since, for any $\bm u\in \C^N$ satisfying $(\mc C -\lambda I_N)\bm u = 0$, we must have
\begin{equation}\label{eq:transfer}
    \left(\prod_{i=a}^{b-1}T_i(\lambda)\right)
    \begin{pmatrix}
        \bm u^{(a)}\\ \bm u^{(a-1)}
    \end{pmatrix} = \begin{pmatrix}
        \bm u^{(b)}\\ \bm u^{(b-1)}
    \end{pmatrix}, 
\end{equation}
for any $2\leq a<b\leq N$. {Here, we recall that $\bm u^{(c)}$ for $c=a,a-1,b, b-1$ denotes the $c$\textsuperscript{th} entry of $\bm u$.}

Moreover, if we define the \emph{total transfer matrix} as $T_{tot}\coloneqq \prod_{i=1}^{N}T_i$, then we find that 
\begin{equation}\label{eq:eva_transfermat_characterization}
    \lambda\in \sigma(\mc C) \iff T_{tot}(\lambda)\begin{pmatrix}
        1\\1
    \end{pmatrix} = \nu\begin{pmatrix}
        1\\1
    \end{pmatrix} \text{  for some $\nu \in \R$}.
\end{equation}
We have thus reduced the search for the eigenvalues $\lambda$ of $\mc C$ to understanding the dynamics of $T_{tot} = \prod_{i=1}^{N}T_i$.

As a next step, we would like to connect the dynamic properties of $T_{tot}$ to those of the transfer matrices of the constituent blocks. However, this approach is hindered by the fact that $T_i$ is \enquote{nonlocal} in the sense that it contains both information about the $i$\textsuperscript{th} and $(i-1)$\textsuperscript{th} resonator because $T_i$ takes $(\bm u^{(i)},\bm u^{(i-1)})^\top$ as input.  Here, the superscript $\top$ denotes the transpose. 

Ideally, we would like to work with matrices that only move information from the $i$\textsuperscript{th} site to the $(i+1)$\textsuperscript{th} site and thus only require information about the $i$\textsuperscript{th} resonator. 

The solution lies in performing a change of basis from the vectors $(\bm u^{(i)},\bm u^{(i-1)})^\top$ to the evaluation of the continuous solution and its derivative at the resonator $(u(x_i^{\iL}),u'(x_i^{\iL}))^\top$. In particular, for any $\bm u \in \R^N$, we construct $u:\R \to \R$ as in \cref{thm:capapprox} and denote by $u(x_i^{\iL})$ and $u'(x_i^{\iL})$ the exterior limits at the left edge of the $i$\textsuperscript{th} resonator $x_i^{\iL}$ to find that
\begin{equation}\label{eq:cobeq}
    u(x_i^{\iL}) = \lim_{x\uparrow x_i^{\iL}} u(x) = \bm u^{(i)} \quad \text{ and } \quad u'(x_i^{\iL}) = \lim_{x\uparrow x_i^{\iL}} u'(x) = \frac{\bm u^{(i)}-\bm u^{(i-1)}}{s_{i-1}}.
\end{equation}
At this point, we note that this characterisation holds generally in the subwavelength setting and not only when $\bm u$ is a resonant eigenmode. We refer to \cite{feppon.ammari2022Subwavelength} for further details. 
Now, \cref{eq:cobeq} suggests defining the following change of basis matrices: 
\begin{equation}
    \underbrace{\begin{pmatrix}
        1&0\\
        \frac{1}{s_{i-1}}& -\frac{1}{s_{i-1}}
    \end{pmatrix}}_{S_{i-1}\coloneqq } \begin{pmatrix}
    \bm u^{(i)}\\\bm u^{(i-1)}
\end{pmatrix} = \begin{pmatrix}
    u(x_j^{\iL})\\u'(x_j^{\iL})
\end{pmatrix}.
\end{equation}

Performing this change of basis, we find that 
\begin{equation}\label{eq:propmatdef}
    \begin{pmatrix}
        u(x^{\iL}_{i+1})\\
        u'(x^{\iL}_{i+1})
    \end{pmatrix} = S_iT_i(\lambda)S_{i-1}^{-1} \begin{pmatrix}
        u(x^{\iL}_{i})\\
        u'(x^{\iL}_{i})
    \end{pmatrix} = \underbrace{\begin{pmatrix}
        1-s_i\ell_i\lambda & s_i\\
        -\ell_i\lambda & 1
    \end{pmatrix}}_{P_{l_i,s_i}(\lambda)\coloneqq} \begin{pmatrix}
        u(x^{\iL}_{i})\\
        u'(x^{\iL}_{i})
    \end{pmatrix}.
\end{equation}
For general resonator lengths $\ell$, spacings $s$ and frequencies $\lambda$, we will call $P_{\ell,s}(\lambda)$ the \emph{subwavelength propagation matrix}. At this point, we would like to note that the subwavelength propagation matrix $P_{\ell,s}(\lambda)$ can alternatively be obtained by calculating the regular propagation matrix for \eqref{waveeq} and then by taking the leading-order in its asymptotic expansion as $\omega =\BO(\sqrt{\delta})$ and $\delta\to 0$.

From \cref{eq:propmatdef}, we can also see that, in contrast to the transfer matrix $T_i(\lambda)$, the propagation matrix $P_i(\lambda)\coloneqq P_{\ell_i,s_i}(\lambda)$ of the $i$\textsuperscript{th} resonator depends \emph{only} on the properties of that resonator, as desired. 
As a consequence, the \emph{block propagation matrices} $P_{B_d}(\lambda)$ are well defined as
\begin{equation}
    P_{B_d}(\lambda) \coloneqq \prod_{k=1}^{\len(B_d)}P_{\ell_k(B_d), s_k(B_d)}(\lambda),
\end{equation}
and we can decompose the \emph{total propagation matrix} $P_{tot}(\lambda) \coloneqq \prod_{i=1}^NP_i(\lambda)$ as the product of the block propagation matrices
\begin{equation*}
    P_{tot}(\lambda) = \prod_{j=1}^MP_{B_{\oo{j}}}(\lambda).
\end{equation*}

\subsection{Propagation matrices and bandgaps}
Having defined the block propagation matrices $P_{B_d}(\lambda)$ we now aim to relate their properties to the spectrum $\sigma(\mc C)$ of $\mc C$.

We begin by identifying the pass bands and bandgaps of the constituent blocks. To do so, we investigate the eigenvalues of $P_{B_d}(\lambda)$, which we shall denote $\xi_1, \xi_2$ sorted by ascending absolute value $\abs{\xi_1} < \abs{\xi_2}$. Note that although $\xi_1, \xi_2$ are eigenvalues of $P_{B_d}(\lambda)$, they are not related to the resonant frequencies $\lambda_j$. Instead, they control the spatial growth or decay induced by the propagation matrix. 

From \cref{eq:propmatdef}, we can see that $\det P_{\ell,s}(\lambda) = 1$ for any $\ell,s,\lambda\in \R$ and thus also $\det P_{B_d}(\lambda) = 1$ for any block propagation matrix. The eigenvalues of $P_{B_d}(\lambda)$ are thus given by 
\[
    \xi_{1,2} = \frac{\tr P_{B_d}(\lambda) \pm \sqrt{(\tr P_{B_d}(\lambda))^2 - 4}}{2},
\]
leading to the following characterisation:
\begin{equation}
    \begin{cases}
        \xi_1 = \overline{\xi_2} \in \mathbb{S}^1\subset \C & \text{if } \abs{\tr P_{B_d}(\lambda)}\leq2, \\
        0 < \abs{\xi_1} < 1 < \abs{\xi_2} & \text{if } \abs{\tr P_{B_d}(\lambda)}>2.
    \end{cases}
\end{equation}
We can thus see that a periodised arrangement of the block $B_d$ supports oscillatory extended modes if $\abs{\tr P_{B_d}(\lambda)}\leq2$ while any frequency for which $\abs{\tr P_{B_d}(\lambda)}>2$ must be exponentially growing or decaying and can thus not be a resonant frequency of the periodised system. 
It is thus natural to consider a frequency $\lambda\in \R$ in the \emph{pass band} of the block $B_d$ if $\abs{\tr P_{B_d}(\lambda)}\leq2$ and in the \emph{bandgap} if $\abs{\tr P_{B_d}(\lambda)}>2$.

{
We can now use the propagation matrix theory developed in \cite{hori.matsuda1964Structure} to characterise bandgaps of block disordered systems. To avoid edge effects we will study the infinite limit obtained by sampling an infinite sequence $\chi \in \ldz$ and arranging the blocks accordingly in both directions. This infinite system is then characterised by a tridiagonal bounded lattice operator $\mc C_\infty:\ell^2(\Z)\to\ell^2(\Z)$. For such a system we can characterise the spectral bandgaps as follows.
\begin{theorem}\label{thm:saxonhutner}
    Consider an infinite block disordered system with blocks $B_1, \dots, B_D$, sequence $\chi \in \ldz$ and a corresponding generalised capacitance operator $\mc C_\infty$. If a given frequency $\lambda\in \R$ lies in the bandgap of all constituent blocks\footnote{Furthermore, an additional assumption on the propagation matrix eigenvectors ensuring the existence of invariant cones is required. We refer to \cite{hori.matsuda1964Structure} for the details.}, that is,
    
    \[
        \abs{\tr P_{B_d}(\lambda)} > 2 \quad \text{ for all }d=1,\dots, D,
    \]
    then $\lambda$ must also lie in the bandgap of the entire system, \emph{i.e.} $\lambda\notin \sigma(\mc C_\infty)$.
\end{theorem}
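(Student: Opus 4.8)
The plan is to characterise membership in the spectrum through the dynamics of the propagation cocycle and then to show that, under the hypothesis, this cocycle is uniformly hyperbolic. First I would recall that $\mc C_\infty = VC_\infty$ is similar to the Hermitian operator $V^{1/2}C_\infty V^{1/2}$ (via conjugation by $V^{1/2}$, exactly as noted for $\mc C^\alpha$ earlier in the text), so its spectrum is real and $\lambda\notin\sigma(\mc C_\infty)$ is equivalent to $(\mc C_\infty-\lambda I)$ being boundedly invertible on $\ell^2(\Z)$. For a bounded Jacobi-type operator, and hence via the similarity for $\mc C_\infty$, this is in turn equivalent to the transfer cocycle at energy $\lambda$ admitting an exponential dichotomy on $\Z$: the spectrum is precisely the set of $\lambda$ at which the cocycle fails to be uniformly hyperbolic. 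I would then transport this equivalence into the propagation-matrix picture using the change of basis \eqref{eq:cobeq}--\eqref{eq:propmatdef}, so that it suffices to prove that the bi-infinite products $\prod_j P_{B_{\oo{j}}}(\lambda)$ grow uniformly exponentially in both directions.

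Next I would exploit the block structure. By hypothesis $\abs{\tr P_{B_d}(\lambda)}>2$ for every $d$, so each block propagation matrix lies in $SL_2(\R)$ (recall $\det P_{\ell,s}(\lambda)=1$, whence $\det P_{B_d}(\lambda)=1$) and is hyperbolic, with real eigenvalues $\xi_1,\xi_2=1/\xi_1$ satisfying $0<\abs{\xi_1}<1<\abs{\xi_2}$. Over a single block the evolution therefore has a well-defined expanding and a contracting eigendirection, and the evolution over any window of blocks is the product $\prod_j P_{B_{\oo{j}}}(\lambda)$. The question thus reduces to whether a bi-infinite product of individually hyperbolic $SL_2(\R)$ matrices is itself uniformly hyperbolic.

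This is where the main obstacle lies, and where I would invoke \cite{hori.matsuda1964Structure}. In general a product of hyperbolic $SL_2(\R)$ matrices need not be hyperbolic, which is exactly the failure that makes the naive Saxon--Hutner statement false without extra hypotheses. The remedy is the additional assumption flagged in the footnote: that the expanding eigendirections of all the $P_{B_d}(\lambda)$ lie in a common cone $K\subset\R^2$ that every $P_{B_d}(\lambda)$ maps strictly into itself while strictly expanding vectors in it (with the complementary cone strictly contracted under the inverses). Given such a common invariant cone, I would run a cone-field argument: every finite product maps $K$ into $K$ and multiplies the relevant projective component by a factor bounded below by $\min_d$ of the per-block expansion, which is $>1$. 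Iterating yields uniform exponential growth of the cocycle on $K$, and the dual statement for the backward cocycle on the complementary cone; together these are precisely an exponential dichotomy.

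Finally I would conclude: uniform hyperbolicity of the propagation cocycle means no nonzero solution of $(\mc C_\infty-\lambda I)\bm u=0$ can remain bounded on all of $\Z$, since any such solution grows exponentially at $+\infty$ or at $-\infty$ according to its component along the expanding/contracting splitting; equivalently, $(\mc C_\infty-\lambda I)$ is boundedly invertible, so $\lambda\notin\sigma(\mc C_\infty)$. The only genuinely delicate step is the cone argument, which is why the statement is proved modulo the Hori--Matsuda eigenvector hypothesis rather than for arbitrary hyperbolic blocks.
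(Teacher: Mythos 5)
Your proposal is correct and takes essentially the same route as the paper: both pass from the transfer matrices to the propagation matrices via the change of basis $T_i(\lambda) = S_i^{-1}P_i(\lambda)S_{i-1}$, regroup the product into the block propagation matrices $P_{B_{\oo{j}}}(\lambda)$, and rest the key step on the Hori--Matsuda invariant-cone hypothesis for products of hyperbolic $SL_2(\R)$ matrices. The only difference is packaging: the paper invokes \cite[Theorem I]{hori.matsuda1964Structure} as a black box (which already contains the spectral conclusion), whereas you unpack it into a dichotomy characterisation of $\sigma(\mc C_\infty)$ plus an explicit cone-field argument --- a more self-contained presentation of the same underlying mathematics.
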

\begin{proof}
    We first note that \cref{eq:transfer} continues to hold in the infinite case and $T_i(\lambda)$ for $ i\in \Z$ are thus the transfer matrices of the infinite system. Rearranging \cref{eq:propmatdef} into $T_j(\lambda) = S_i^{-1}P_i(\lambda)S_{i-1}$ yields
    \[
        \prod_{i=a}^b T_i(\lambda) = S_b^{-1}\prod_{i=a}^bP_j(\lambda)S_{a-1}
    \]
    ensuring $T_i$ and $P_i$ exhibit similar dynamics. 

    We can then collect the terms $P_j$ by blocks to find $P_{B_{\oo{j}}}(\lambda)$ for $j\in \Z$. The result then follows by applying \cite[Theorem I]{hori.matsuda1964Structure} to the sequence $P_{B_{\oo{j}}}(\lambda)$.
\end{proof}}

\begin{figure}
    \centering
    \includegraphics[width=0.95\textwidth]{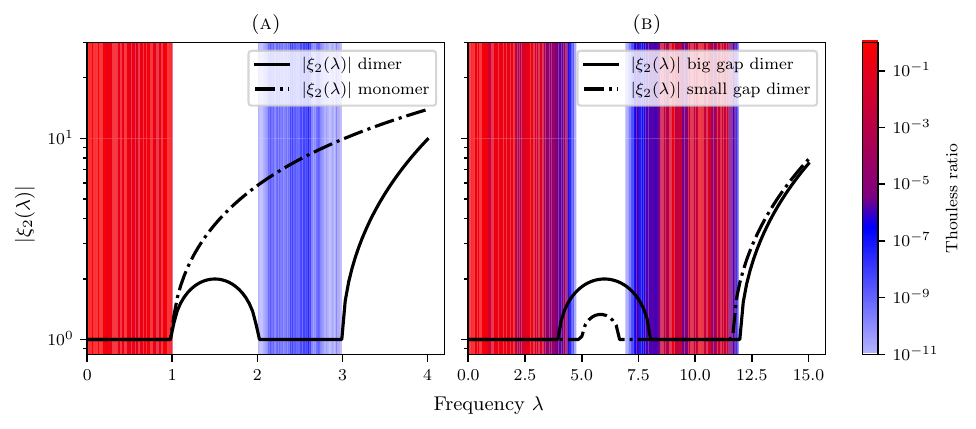}
    \caption{Comparison of maximal block propagation matrix eigenvalue $\abs{\xi_2(\lambda)}$ and Thouless ratios $g(\lambda_i)$ of the entire random block disordered system consisting of these block (each coloured vertical line corresponds to the Thouless ratio $g(\lambda_i)$ of the eigenvalue $\lambda_i$). Where there are no such lines, the density of eigenvalues is zero. 
    For each subplot, the random block disordered system was constructed by sampling $M=100$ of the respective blocks. \textbf{Left:} Single resonator (monomer) blocks and dimer resonator blocks as in \cref{ex:standard_blocks}. \textbf{Right:} Two distinct dimer blocks.}
    \label{fig: var and eigs prop matrix}
\end{figure}

This theorem is illustrated in \cref{fig: var and eigs prop matrix}. We plot the larger eigenvalue $\xi_2(\lambda)$ as a function of the frequency $\lambda\in \R$ for a variety of blocks. Furthermore, we construct a random block disordered system using the respective building blocks and $M=100$ samples, and for each eigenfrequency $\lambda_i\in \sigma(\mc C)$ we also plot the corresponding Thouless ratio $g(\lambda_i)$ as a vertical line. { Although only formulated for the infinite case $\chi\in \ldz$, empirically this theorem also gives an accurate prediction of the spectrum in the finite case $\chi\in \ldm$.} In subplot (\textsc{a}), we consider the standard single resonator (monomer) and dimer blocks introduced in \cref{ex:standard_blocks}, while in subplot (\textsc{b}), we consider two different dimer blocks with distinct band structures. We can see that \cref{thm:saxonhutner} accurately predicts the bandgaps of the respective random block disordered systems. 

Moreover, we can see that in the block random setting, the prediction is even stronger. That is, a random block disordered system exhibits a bandgap at $\lambda$ if \emph{and only if} $\lambda$ is in the gap of all constituent blocks, going beyond \cref{thm:saxonhutner}. The \emph{only if} part stems from the fact that, since the blocks are sampled independently, any arbitrarily long sequence of the same block repetitions will occur with probability $1$ as $M\to \infty$. These same-block sequences contribute with eigenvalues wherever their corresponding block has a band, and thus for any frequency $\lambda$ lying in the pass band of any of the constituent blocks, we will find resonant frequencies arbitrarily close to $\lambda$ as $M\to \infty$. 

When the eigenfrequency $\lambda_i$ lies in the pass band of one of the blocks and the bandgap of the others, we can see that the Thouless ratio $g(\lambda_i)$ is much lower compared to the Thouless ratios of the eigenfrequencies lying in the shared pass band. This indicates that the eigenmodes found there exhibit a significantly stronger localisation. Nevertheless, the density of states in these regions is nonzero, distinguishing them from the bandgaps of the total system. Intuitively, each block for which this region does not lie in the bandgap contributes eigenvalues to this region -- but because these eigenvalues lie in the gap of some other constituent blocks, they become strongly localised. 

Since such regions display a mixture of behaviours observed in both the shared pass band and the bandgap (and for reasons which will be examined in the following subsection), we shall call such regions \emph{hybridisation regions}. We obtain the following complete description of the spectral regions of block disordered systems: 
A frequency $\lambda$ lies in the
\begin{description}
    \item[Shared pass band] if and only if it is in the band of all constituent blocks;
    \item[Bandgap] if and only if it is in the bandgap of all constituent blocks;
    \item[Hybridisation region] otherwise.
\end{description}
In the bandgap the density of eigenmodes is zero and only defect modes can occur, while both the shared pass band and hybridisation regions support resonant modes -- though at strongly distinct degrees of localisation as seen in \cref{fig: var and eigs prop matrix}.

\subsection{Lyapunov exponent prediction}
Finally, in this section, our aim is to relate the \emph{Lyapunov exponent} of the total system to those of the constituent parts, as this will allow us to predict the decay of a given mode at a given frequency $\lambda$. 
The \emph{Lyapunov exponent} is defined as the norm of the total transfer matrix $T_{tot}(\lambda)$ over the total number of resonators $N$: 
\begin{equation}\label{eq:lyapunov}
    \gamma(\lambda) = \frac{1}{N}\ln \norm{T_{tot}(\lambda)},
\end{equation}
and can thus be seen as describing \enquote{typical} decay per resonator.

We now aim to predict the Lyapunov exponent for large random block disordered systems with block probabilities $p_1, \dots, p_D$ at frequencies $\lambda$ such that the assumptions of \cref{thm:saxonhutner} are satisfied, \emph{i.e.} $\lambda$ lies in the bandgap of all constituent blocks. In the following, we will use $\rho(A)$ to denote the \emph{spectral radius} of the matrix $A$ and obtain the approximation
\begin{equation}\label{eq:lyapunov_estimate}
    \begin{gathered}
    \gamma(\lambda) 
    = \frac{1}{N}\ln \norm{T_{tot}(\lambda)} 
    \approx \frac{1}{N}\ln \rho(T_{tot}(\lambda)) 
    = \frac{1}{N}\ln \rho(P_{tot}(\lambda)) \\
    \approx \frac{1}{N} \sum_{j=1}^M \ln \rho(P_{B_{\oo{j}}}(\lambda)) = \frac{1}{M}\sum_{j=1}^M\frac{M}{N} \ln \rho(P_{B_{\oo{j}}}(\lambda)) \\
    \approx \E \frac{M}{N} \ln \rho(P_{B_{\oo{0}}}(\lambda))
    =\frac{\sum_{d=1}^Dp_d \ln \rho(P_{B_d}(\lambda))}{\sum_{d=1}^Dp_d \len(B_d)}.
\end{gathered}
\end{equation}
\begin{figure}
    \centering
    \includegraphics[width=0.95\textwidth]{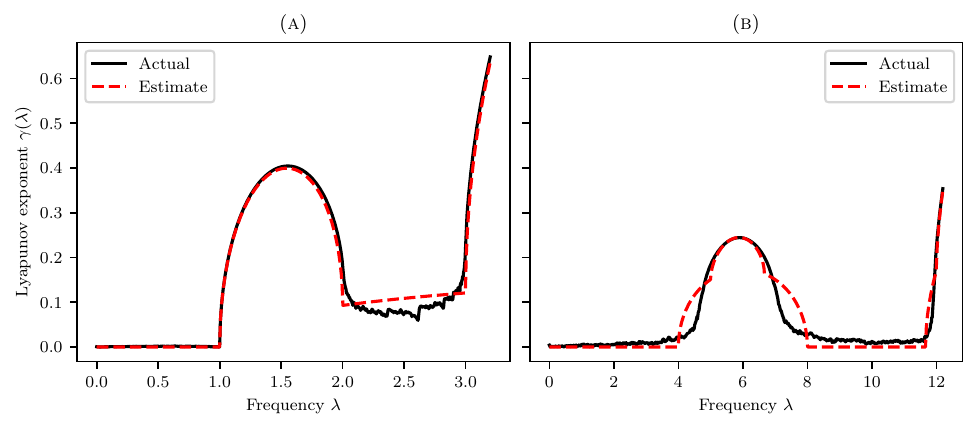}
    \caption{Actual Lyapunov exponent $\gamma(\lambda)$ versus the estimate obtained from \cref{eq:lyapunov_estimate} for large ($M=10^5$) random block disordered systems. \textbf{Left:} Single resonator (monomer) blocks and dimer resonator blocks as in \cref{ex:standard_blocks}. \textbf{Right:} Two distinct dimer blocks.}
    \label{fig:lyapunov_vs_estimate}
\end{figure}

Here, the first approximation originates from the fact that $$\frac{1}{N}\ln \norm{T_{tot}(\lambda)} = \frac{1}{N}\ln \kappa(T_{tot}(\lambda))\rho(T_{tot}(\lambda)),$$ where $\kappa(T_{tot}(\lambda))$ is the condition number of $T_{tot}(\lambda)$. The assumptions of \cref{thm:saxonhutner} ensure that $\kappa(T_{tot}(\lambda)) = \mc O(1)$ as $N\to \infty$, which shows that the approximation error is of the order $\mc O(N^{-1})$. 

The second approximation hinges on the fact that as we get deeper into the bandgap, the block propagation matrices become increasingly hyperbolic (\emph{i.e.} $\abs{\tr P_{B_d}(\lambda) \gg 2}$) and their eigenspaces tend to line up allowing for the splitting { $\rho(P_{tot}(\lambda)) \approx \prod_{j=1}^M\rho(P_{\oo{j}}(\lambda))$}. 

Finally, using the fact that the blocks are sampled randomly and independently with probabilities $p_1,\dots, p_D$ and applying the law of large numbers,  we obtain the last approximation.

In \cref{fig:lyapunov_vs_estimate}, we compare the estimated Lyapunov exponent $\overline{\gamma}(\lambda)$ obtained using \cref{eq:lyapunov_estimate} with the actual Lyapunov component for large random block disordered systems. The random systems considered are the same as in \cref{fig: var and eigs prop matrix} and we see that the estimate accurately matches the actual Lyapunov exponent where the frequency $\lambda$ lies in the shared bandgap $(1,2)$ for system (\textsc{a}) and approximately $(5,6.6)$ for system (\textsc{b})). In the next section, we employ this estimate to accurately predict the decay of a defect mode in the bandgap.

It is also worth examining why this prediction fails in the other regions, as this is connected to a variety of fundamental phenomena of disordered systems:

In the shared pass band regions where $\abs{\tr P_{B_d}(\lambda)}<2$ for all $d=1,\dots ,D,$ we necessarily estimate $\overline{\gamma}(\lambda)=0$ . However, recall that in \cref{fig:var_vs_loc_increasing size} we observed localisation even for eigenfrequencies $\lambda_i\in (0,1)$ in the shared pass band of system (\textsc{a}) due to \emph{disorder localisation} which induces (comparatively weak) exponential decay for all eigenmodes of a one-dimensional system. 
Because estimate \cref{eq:lyapunov_estimate} only considers \emph{bandgap decay}, \emph{i.e.} whether $\lambda$ lies in the gap of one of the constituent blocks, this leads to a systematic underestimation of the actual small but nonzero Lyapunov exponent in such shared pass band regions. 

In contrast, in hybridisation regions (\emph{ex.} $(2,3)$ for (\textsc{a}) or approximately $(6.6,8)$ for (\textsc{b})), estimate \cref{eq:lyapunov_estimate} systematically \emph{overestimates} the actual Lyapunov exponent. This has an interesting explanation due to the random \emph{i.i.d.} sampling of the resonator blocks: Due to this sampling, every local arrangement of blocks occurs arbitrarily often as the number $M$ of sampled blocks increases. Consequently, every time a strongly localised eigenmode $\bm u_j$ in the hybridisation region encounters a sequence of blocks that closely resembles the sequence at its original localisation, hybridisation occurs, causing a local increase in the magnitude of the eigenmode. This hybridisation prevents $\bm u_j$ from realising the full decay predicted by estimate \cref{eq:lyapunov_estimate} and motivates naming such eigenmodes \emph{hybridised bound states} and the regions where they occur \emph{hybridisation regions}. We note that while this hybridisation prevents the modes from achieving maximal decay, it is still too weak to induce delocalisation, as the expected distance before a matching sequence is encountered increases exponentially with the desired accuracy of the match.

\section{Defect frequencies}  \label{sec:defects}
Having identified a method of creating strongly disordered systems with well-defined bands and bandgaps, we are now in a position to investigate localised defect eigenmodes in the disordered setting.
There are multiple strategies on how to modify a periodic structure of subwavelength resonators with a bandgap to support localised defect eigenmodes \cite{ammari.davies.ea2021Functional,cbms}. In this section, we consider one of these --- the most suited for disordered systems --- and show that it generates localised eigenmodes in disordered structures.

We consider a disordered structure of any type known to have a bandgap predicted by \cref{thm:saxonhutner} and seek to slightly modify the system to obtain (at least) one localised eigenmode centred on some specific resonator, indexed by $i_{d}$.

The creation of a localised eigenmode occurs by modifying the wave speed inside the resonator $i_d$. This approach is particularly suited to a disordered system because it does not rely on any geometrical assumption. Mathematically, this translates to considering the eigenvalues of the generalised capacitance matrix $\mathcal{C}=VC$ with
$V$ instead of being defined as in (\ref{def:v}) is now given by
\begin{align*}
    V := \begin{pmatrix} \ds \frac{v_1^2}{\ell_1} 
& &  \\
	  & \ddots & \\
      &&\frac{(v_{i_d} + \eta)^2}{\ell_{i_d}} \\
      && &\ddots\\
	   &  &&& \ds \frac{v_N^2}{\ell_N}
	\end{pmatrix}.
\end{align*}

In principle, there is no reason why this procedure should generate localised eigenmodes also for disordered systems; after all, there is no clear (periodic) structure that is perturbed. However, the underlying bandgap structure supports the presence of these eigenmodes. 

In \cref{fig: defect mode prediction}, we show the result of this procedure on a structure composed of disordered blocks as in \cref{fig: var and eigs prop matrix}(\textsc{a}). Two eigenvalues jump into the bandgap and their corresponding eigenmodes become localised around the index $i_d$ at which the perturbation is performed. We note that the two eigenmodes have very different decay rates. As in periodic systems, this is due to the different distances that the two eigenvalues in the gap have from the bands.

\begin{figure}
    \centering
    \includegraphics[width=\textwidth]{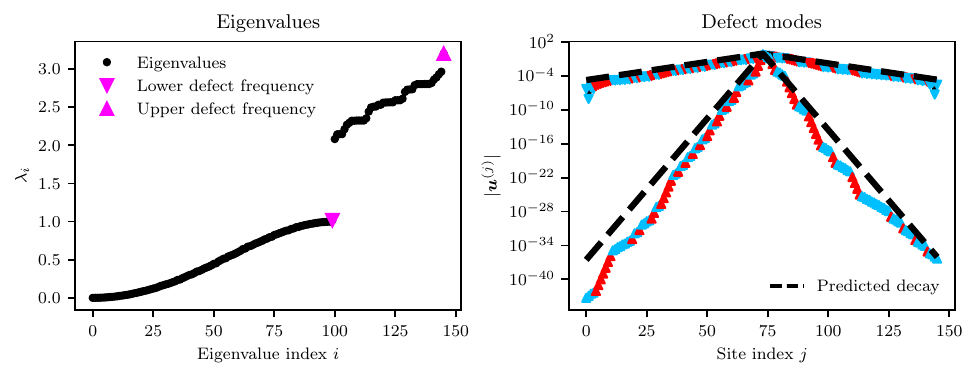}
    \caption{Defect modes for a disordered system as in \cref{ex:standard_blocks} with $M=100$ blocks and a defected dimer in the middle ($i_d=N/2, \eta=0.2$).
    \textbf{Left:} Spectrum of the defected system with two defect frequencies in the bandgap. \textbf{Right:} The two eigenmodes corresponding to the defect frequencies. The upper / lower triangle marks correspond to the upper / lower defect frequency and the mark colour indicates whether the respective site belongs to a single resonator block (red) or dimer block (blue). The eigenmodes are exponentially localised about the defected site, with their rate of decay closely matching the predicted decay (dashed line) obtained by estimate \cref{eq:lyapunov_estimate}.
    The outline of the localised eigenvectors is even more accurately predicted, up to an error of $10^{-13}$, by the discrete Green function defined in \eqref{eq: discrete green function}.}
    \label{fig: defect mode prediction}
\end{figure}

\subsection{Defect eigenmode prediction}
The capacitance matrix approximation of the system delivers the following modal decomposition of the Green function $G(\lambda)$ associated with the system of resonators:
\begin{align}
    G(\lambda) = (\mathcal{C}-\lambda \, I_N)^{-1} = \sum_{i=1}^N\frac{\bm{u}_i \overline{\bm{u}_i}^\top}{\lambda - \lambda_i}, 
    \label{eq: discrete green function}
\end{align}
for $\lambda$ not being an eigenvalue of $\mathcal{C}$, 
where $I_N$ is the $N\times N$ identity matrix (with $N$ being the number of resonators) and $(\lambda_i, \bm{u}_i)$ are the eigenpairs of $\mathcal{C}$.

Away from the spectrum of $\mathcal{C}$, we can thus predict the outline of the eigenmode associated with an excitation at the $i$\textsuperscript{th} resonator by $G(\lambda) \bm{e}_i$, where $(\bm{e}_i)_{i=1}^N$ denotes the standard basis of $\R^N$. The absolute error in predicting the defect eigenmode $v_d$ via $G(\lambda_d) \bm{e}_{i_d}$ is of the order of $10^{-13}$ with $\lambda_d$ being the defect eigenfrequency. Remark that for this computation, one can use the unperturbed generalised capacitance matrix, providing an accurate \emph{a priori} estimation of the defect eigenmodes. However, the defect eigenfrequency $\lambda_d$ still needs to be computed.

The modal decomposition \eqref{eq: discrete green function} can generally be used to show that, as in the periodic case, the defect midgap eigenvalues are associated with localised eigenmodes. This follows from the fact that the inverse of the tridiagonal matrix $\mathcal{C}-\lambda \, I_N$ when $\lambda$ is in a bandgap of $\sigma(\mc C)$ has exponentially decaying entries \cite{jaffard}.

At the beginning of this section, we have commented on the different decay rates of the defect eigenvectors shown in \cref{fig: defect mode prediction}. This effect turns out to be of an interesting origin. In \cref{fig:defect_mode_bandgap_crossing}, we show that the position of the defect eigenvalue depends on which of the blocks is chosen for the perturbation. Recall that we are considering a disordered block structure composed of two types of blocks --- dimer blocks and monomer blocks --- arranged randomly. In the left part of \cref{fig:defect_mode_bandgap_crossing}, we see that if we perturb one of the single resonator blocks, the upper defect eigenmode stays fixed while the lower defect eigenmode moves across the bandgap. For larger perturbations, the lower eigenmode ceases to move, and the upper defect eigenmode moves up instead. If we instead perturb a dimer block, we see the opposite picture. Perturbation causes the upper defect eigenmode to move into the upper gap while the lower defect eigenmode barely moves. We remark furthermore that if one happens to know beforehand that the outermost eigenvalue of a band has an eigenmode localised around some resonator, then perturbing that precise resonator makes the eigenvalue immediately jump into the bandgap avoiding the initially flat part of the blue curve in \cref{fig:defect_mode_bandgap_crossing}(\textsc{b}). Note that in \cref{fig: defect mode prediction} we perturb a dimer block, inducing a large jump for the upper defect eigenvalue and a small jump for the lower one. This is reflected in the two very different decay rates of the corresponding eigenmodes. 

Once the defect frequency is known, we can employ the Lyapunov exponent estimate \cref{eq:lyapunov_estimate} obtained in the previous section to estimate the decay rate of the corresponding defect mode.
As we can see in \cref{fig: defect mode prediction}, this prediction matches the actual decay rate very closely and thus also explains the distinct defect mode decay rates: The upper defect frequency is significantly deeper into the bandgap which corresponds to a larger Lyapunov exponent, as predicted by \cref{eq:lyapunov_estimate} and thus stronger decay.

\begin{figure}[h]
    \centering
    \includegraphics[width=0.9\linewidth]{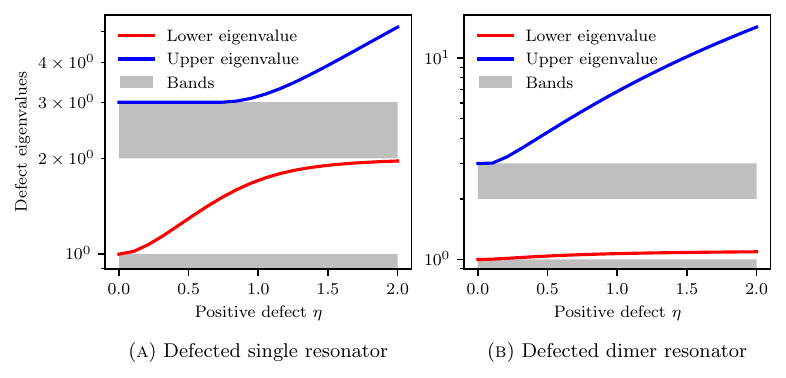}
    \caption{Defect eigenfrequency as a function of positive material parameter perturbation strength. We are perturbing the structure from \cref{fig: var and eigs prop matrix}(\textsc{a}). \textbf{Left:} Perturbation of a single resonator block creating a large jump in the lower eigenvalue. \textbf{Right:} Perturbation of a dimer resonator block creating a large jump in the upper eigenvalue.}
    \label{fig:defect_mode_bandgap_crossing}
\end{figure}

\subsection{Dimer Defect}
Another interesting effect recently observed in \cite{ammari.davies.ea2024Anderson} for the periodic case is the \emph{repulsion effect} occurring when two adjacent resonators are perturbed by the procedure illustrated at the beginning of the section. Level repulsion means that when randomness is added to the entries of a matrix, the eigenvalues tend to separate. In periodic systems, 
such a perturbation induces two coupled eigenmodes to jump into the bandgap and to be localised once with a dipole (odd) symmetry and once with a monopole (even) symmetry. We analyse this situation for disordered systems. In \cref{fig: double defect} we plot the two largest eigenvalues when such a perturbation is performed, observing the same level of repulsion seen in periodic structures.
\begin{figure}
    \centering
    \begin{subfigure}[t]{0.32\textwidth}
        \centering
        \includegraphics[height=0.8\textwidth]{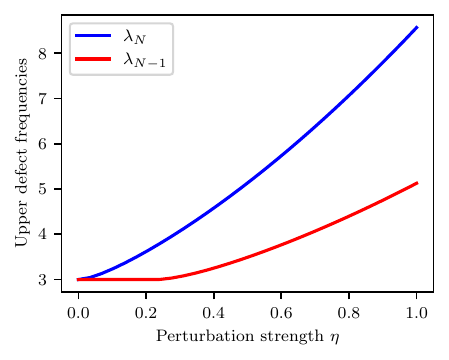}
    \caption{Perturbation of site indices $55$ and $56$.}\label{subfig: doubledeft next}\end{subfigure}\hfill
    \begin{subfigure}[t]{0.32\textwidth}
        \centering
        \includegraphics[height=0.8\textwidth]{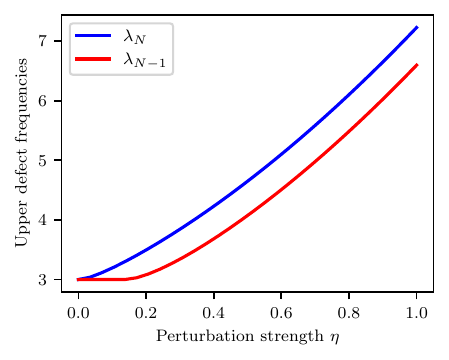}
    \caption{Perturbation of site indices $55$ and $57$.}\label{subfig: doubledeft after next}\end{subfigure}
    \hfill
    \begin{subfigure}[t]{0.32\textwidth}
        \centering
        \includegraphics[height=0.8\textwidth]{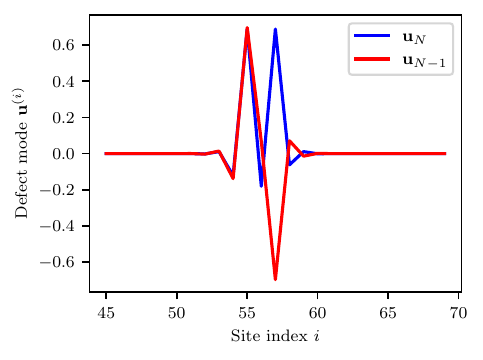}
    \caption{Eigenvectors showing once a monopole and once a dipole nature.}\label{subfig: doubledeft eve}\end{subfigure}
    \caption{When perturbing juxtaposed resonators, two eigenvalues jump into the bandgaps. The distance between these eigenvalues depends on the distance between the perturbed resonators. In any case, the eigenvectors display opposed nature.}
    \label{fig: double defect}
\end{figure}

\section{Stability analysis}  \label{sec4:AL}
A natural question that arises when studying the bands and
bandgaps of disordered systems is whether they are stable under perturbation.
To that end, in line with the previous section, we investigate the behaviour of our disordered systems as the wave speeds inside the resonators are perturbed globally.


In the unperturbed case, all resonators have the same wave speed $v_i=1$. We now globally perturb the wave speed as follows:
\begin{equation*}
    v_i' = v_i + \eta_i,
\end{equation*}
where the perturbations $\eta_i\sim \mathcal{U}[-\sigma,\sigma]$ are independently drawn from the uniform distribution $\mathcal{U}[-\sigma,\sigma]$ with support in $[-\sigma,\sigma]$. In particular, we note that such perturbations destroy the prior block structure of our disordered block systems.

This setup resembles the one studied in \cite{ammari.davies.ea2024Anderson} for the purely periodic case. Indeed, many observations from the periodic case still hold in the disordered bandgapped setting. \cref{fig:global_perturbation} demonstrates that the two statistical phenomena that dominate the periodic case continue to do so in the disordered case. Namely, level repulsion pushing the eigenvalues apart, as demonstrated by the increasing mean eigenvalue separation and Anderson localisation causing a complete localisation of the eigenmodes, as demonstrated by the decreasing mean band variation and increasing degree of localisation.

\begin{figure}[h]
    \centering
    \begin{subfigure}[t]{0.4\textwidth}
        \centering
        \includegraphics[width=\textwidth]{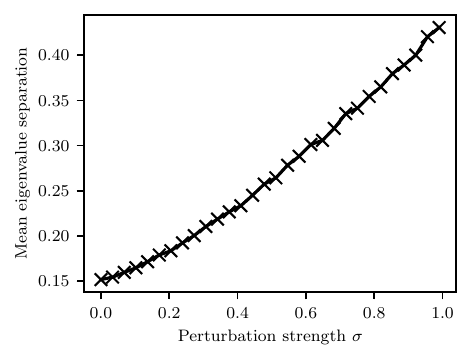}
    \caption{Level Repulsion: Global mean eigenvalue separation as a function of the perturbation strength $\sigma$.}\end{subfigure}\hfill
    \begin{subfigure}[t]{0.5\textwidth}
        \centering
        \includegraphics[width=\textwidth]{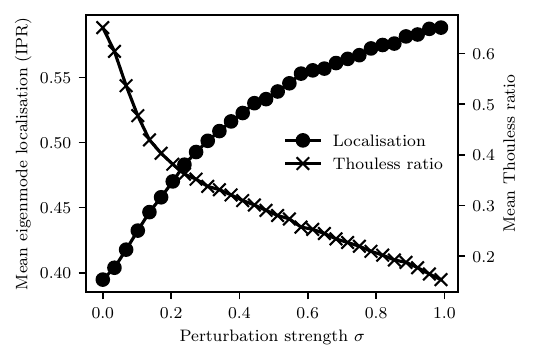}
    \caption{Anderson localisation: Mean eigenmode localisation $\norm{\bm{u}}_4/\norm{\bm{u}}_2$ and mean band variation as a function of the perturbation strength $\sigma$.}\end{subfigure}
    \caption{Level repulsion and Anderson localisation under material parameter perturbation (averaged over $100$ random realisations). The material parameters of the double pass band system from \cref{fig: var and eigs prop matrix}(\textsc{b}) are perturbed uniformly with perturbation strength $\sigma$.}
    \label{fig:global_perturbation}
\end{figure}

These two forces are illuminated further in \cref{fig: dos_variance_histogram}. The starting point is the disordered system with two pass bands introduced in \cref{fig: var and eigs prop matrix}(\textsc{a}). Without perturbation, we can see that the density of eigenfrequencies is evenly distributed into the two bands, with slightly higher density at the band edges. For small perturbations, the bandgap remains unaltered; this is due to Weyl's spectral stability result and the inherited Hermitian nature of the problem. As the perturbation strength increases, the band edges begin to dissolve, as level repulsion causes the eigenvalues to spread out, causing a progressive closing of the bandgap. This also illustrates why the lower band is more resistant to perturbation, as the eigenvalues cannot spread out below zero. At the same time, Anderson localisation causes the eigenmodes to localise, where again the eigenmodes in the upper band are more susceptible. Furthermore, for large perturbation strength, the bandgap vanishes and the majority of eigenmodes become fully localised as the system transitions from a disordered but structured system to a fully random and unstructured one.
\begin{figure}[h]
    \centering
    \includegraphics[width=\textwidth]{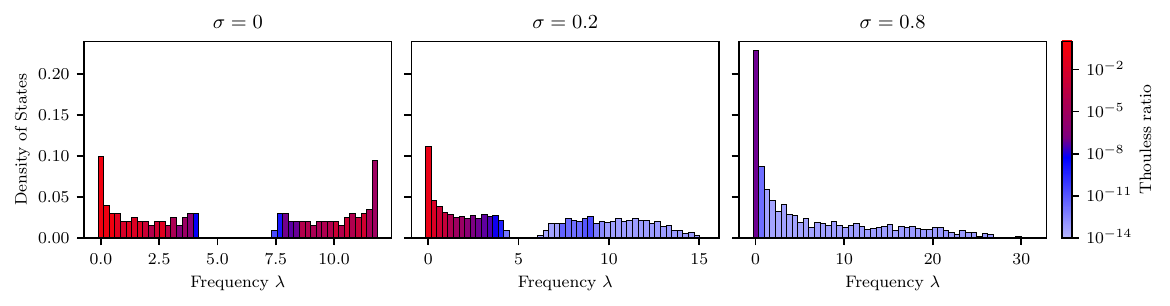}
    \caption{Density of eigenfrequencies and band variation under material parameter perturbation (averaged over $100$ random realisations). We again perturb the double pass band system from \cref{fig: var and eigs prop matrix}(\textsc{b}) uniformly. The colours mark the mean band variation in the corresponding bin. As the perturbation strength increases, the density of eigenfrequencies spreads out, converging to the expected $e^{-x}$ distribution \cite{ammari.davies.ea2024Anderson} and the eigenmodes become localised.}
    \label{fig: dos_variance_histogram}
\end{figure}

Finally, in \cref{fig: global_perturbation_heatmap} we compare the perturbation response to the periodic dimer system. For all three systems, the observed behaviour mirrors the one observed in \cref{fig: dos_variance_histogram} as the eigenvalues spread out and the eigenmodes become localised. In particular, the global perturbation does not appear to make a distinction between eigenmodes in the shared pass bands or not, both becoming completely {localised} for large perturbation strengths.

\begin{figure}[h]
    \centering
    \includegraphics[width=\textwidth]{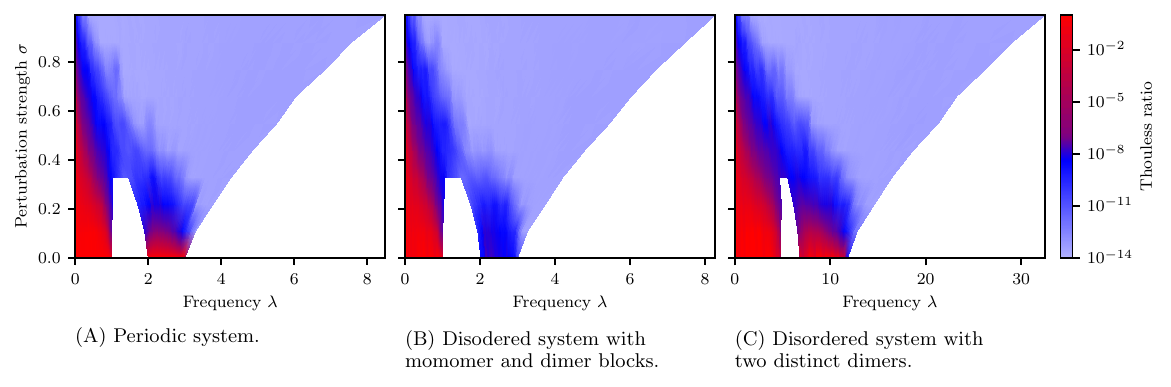}
    \caption{Band variation and spectral distribution of different systems under material parameter perturbation (averaged over $100$ random realisations). We perturb the periodic dimer system from \cref{fig:band_function_qpbc}(\textsc{a}) and the single and double pass band disordered systems from \cref{fig: var and eigs prop matrix}(\textsc{a} and \textsc{b}) uniformly. The colours mark the band variation of the eigenmode at the corresponding point. The behaviour is the same in all cases, as the eigenvalues spread out and the corresponding eigenmodes become increasingly more localised.}
    \label{fig: global_perturbation_heatmap}
\end{figure}

The material parameter perturbation is general in the sense that both spacing and resonator size perturbations cause the same localisation behaviour. This is not surprising, given that perturbations in the material, spacing, and size of the resonators induce equivalent perturbations in the governing generalised capacitance matrix.

\section{Prediction of wave localisation positions and mobility edges}
\label{sec:prediction}
In this section, we introduce two other tools from the study of disordered systems, namely the landscape function and fractal dimension, to gain additional perspectives on wave localisation and mobility edges in the block disordered setting.

\subsection{Landscape function for predicting wave localisation positions}
We first consider the landscape function. { Our aim here is to compare its performance for the detection of localised eigenmodes in perturbed periodic systems with that in disordered systems when hybridised bound eigenmodes occur. As will be numerically shown, the landscape function detects only the localised eigenmodes corresponding to eigenfrequencies that are in the bandgap region. 
It is not sensitive enough to the localised eigenmodes that are in the hybridisation region.}  

The landscape function first introduced in \cite{filoche} has already been used to predict wave localisation in subwavelength resonator systems \cite{davies.lou2023Landscape}. The prediction is through the computation of the  solution $\bm u$ of
\begin{align}
    VC \bm u = \bm 1,
    \label{eq: localisation landscape}
\end{align}
where $\bm 1$ is the vector of only ones. We note that in one dimension this procedure is complicated by the fact that $VC$ has a non-trivial kernel --- to resolve this, we instead compute $\bm u$ to be the least-squares solution of \cref{eq: localisation landscape}. 

In \cref{fig:landscape}, where the entries of the solution $\bm u$ of \eqref{eq: localisation landscape} for defected periodic or block disordered systems are plotted, we see that the localisation landscape can be used for the detection of localised eigenmodes even in the presence of hybridised bound states which are strongly localized as well. 
\begin{figure}[h]
    \centering
    \begin{subfigure}[t]{0.47\textwidth}
        \centering
        \includegraphics[width=\textwidth]{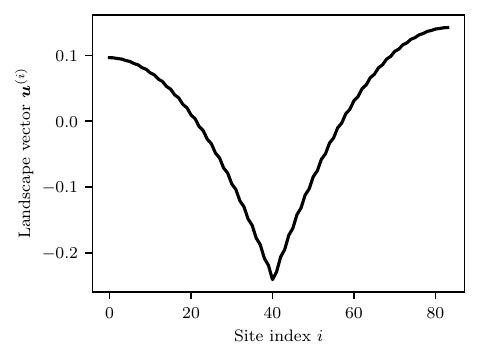}
    \caption{Periodic structure of dimers with material parameter defect at the site index 40.}\end{subfigure}\hfill
    \begin{subfigure}[t]{0.47\textwidth}
        \centering
        \includegraphics[width=\textwidth]{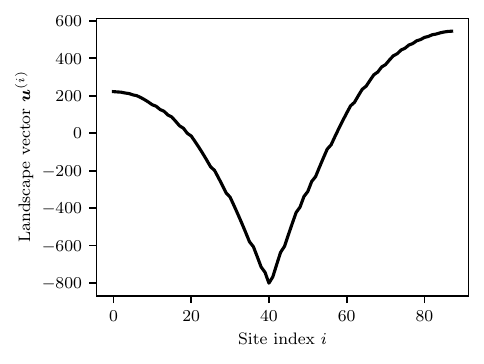}
    \caption{Random block disordered system as in \cref{fig: var and eigs prop matrix}(\textsc{a}) with additional material parameter defect at the site index 40.}\end{subfigure}\hfill
    \caption{The landscape function predicts the localisation induced by material parameter defects in disordered systems, even in the case when strongly localised hybridised bound states are present.}
    \label{fig:landscape}
\end{figure}
By decreasing the material parameter in the defected resonator, a localised eigenmode emerges from the edge of the second pass band into the bandgap of the system. This causes the entries of the landscape function $\bm u$ to achieve their minimum at the location of the defected resonator. 
It should be noted that the landscape function $\bm u$ is unaffected by the localisation of the hybridised bound states in the hybridisation region of the system considered in \cref{fig:landscape}(\textsc{b}).

As discussed in \cite{davies.lou2023Landscape,crossover}, one of the limitations of landscape theory is that it only reveals useful information about the lowest eigenmodes (or, after translation, the highest eigenmodes). Here, the shape of the localised eigenmode appearing in the middle of the spectrum is very nicely predicted by the landscape function. In some sense, the landscape function is better suited for detecting localised eigenmodes in systems with multiple pass bands than in monomer systems.
Such behaviour of the landscape function may be rigorously justified based on Chebyshev polynomials, in a similar way to \cite{ammari.barandun.ea2024Exponentially}.

\subsection{Fractal dimension of localised eigenmodes for predicting mobility edges}

Finally, we show that the behaviour of the fractal dimension related to the localised eigenmodes of randomly perturbed systems can be used to identify the mobility edges in disordered systems, even in the presence of hybridisation regions.

Specifically, in this framework, the fractal dimension of an eigenvector $\bm{u} \in\R^N$ of the generalised capacitance matrix is defined as
    \begin{align}
    L(\alpha,\bm{u}) \coloneqq - \ln (\vert\{i:(\bm{u}^{(i)})^2<N^{-\alpha}\}\vert)
    \label{eq: fractal dim}
\end{align}
for $\alpha>0$, {where $\vert \; \vert$ denotes the cardinality of the set}; see, for instance, \cite{fractaldimension}. We note two interesting values for $L$: $L(\alpha,\bm 1 / \sqrt{N})=-\ln (N)$ for $\alpha<1$ and $\infty$ for $\alpha\geq1$, and $L(\alpha,\bm{e}_i)=-\ln (N-1)$ for every $\alpha$ where $\bm{e}_i$ is an element of the standard basis.

\begin{figure}[h]
    \centering
    \begin{subfigure}[t]{0.45\textwidth}
        \centering
        \includegraphics[width=\textwidth]{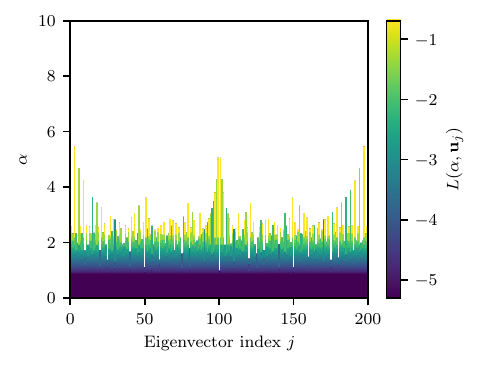}
    \caption{Periodic system.}\end{subfigure}\hspace{5mm}
    \begin{subfigure}[t]{0.45\textwidth}
        \centering
        \includegraphics[width=\textwidth]{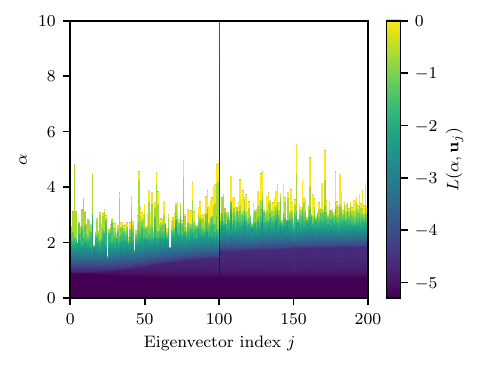}
    \caption{Periodic system with single material parameter defect at index 100.}\end{subfigure}\\
    
    \begin{subfigure}[t]{0.45\textwidth}
        \centering
        \includegraphics[width=\textwidth]{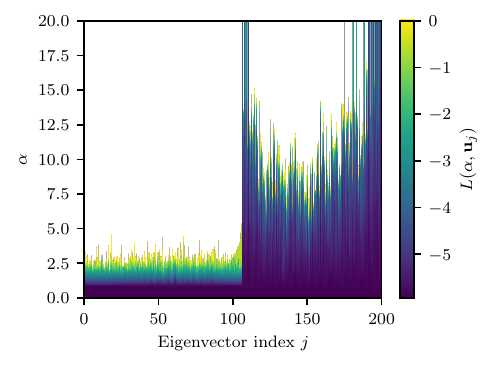}
    \caption{Disordered system from \cref{fig: var and eigs prop matrix}(\textsc{a}).}\end{subfigure}
    \caption{The fractal dimension detects mobility edges in periodic and disordered structures. A clear phase change between the eigenmodes in the shared pass band and those in the hybridisation region can be observed in (\textsc{c}). The eigenvectors are arranged in ascending order with respect to their corresponding eigenvalues.}
    \label{fig: fractal dimension}
\end{figure}
As shown in \cref{fig: fractal dimension}, the fractal dimension detects the mobility edges in both the periodic (\textsc{a}) and disordered (\textsc{c}) case as well as the defect mode in the periodic system with defect (\textsc{b}). 
When comparing the periodic and disordered systems, we note that the fractal dimension of the eigenmodes in the shared pass band of the disordered system closely resembles that the fractal dimension of the delocalised modes in the periodic system. In contrast, the hybridised bound states in the hybridisation region of the disordered system display significantly higher fractal dimension -- owing to their strong localisation. Nevertheless, also in this region, an increase in fractal dimension towards the region edges, characteristic of a mobility edge, can be observed.

Finally, for the defected system, we observe that all but one eigenmode retain a fractal dimension close to the one observed in the periodic structure. The one defect eigenmode displays a much higher fractal dimension, due to its much stronger localisation. 

\section{Concluding remarks} \label{sec:conclusion}

In this paper, we have introduced a variety of tools and concepts to understand subwavelength localisation in disordered systems and discovered a variety of analogues and contrasts to periodic subwavelength systems.

By formulating the Thouless criterion in this setting and employing a propagation matrix approach, we have introduced natural generalisations of the notions of bandgap to this setting and given a complete characterisation of the resulting spectral regions based solely on the constituent block properties by proving a Saxon-Hutner type theorem in this setting. 
We have also introduced a defect mode prediction based on the discrete Green function of the disordered system of resonators, together with a decay rate prediction based again on block properties. Furthermore, by investigating the behaviour of block disordered systems under global perturbations, we have observed that these systems behave much analogously to their periodic counterparts. Finally, we have shown that both the landscape function and the fractal dimension of eigenvectors of the capacitance matrix can be applied to detect, respectively, localised eigenmodes and mobility edges in disordered structures even in the presence of strongly localised hybridised bound eigenmodes. 

On the one hand, our results in this paper open the door to the study of wave localisation in other classes of disordered structures such as quasiperiodic or hyperuniform ones (see, for instance, \cite{brynquasi,hyperuniform}) as well as in non-Hermitian systems of subwavelength resonators with gauge potentials \cite{ammari.barandun.ea2024Mathematical,ammari.barandun.ea2023NonHermitian,yao.wang2018Edge,hatano.nelson1996Localization,yokomizo.yoda.ea2022NonHermitian,rivero.feng.ea2022Imaginary,jahan}. On the other hand, our findings here motivate the study of the distributions of the eigenvalues and the localisation properties of the eigenvectors of products of two matrices; one is a diagonal matrix with independent randomly perturbed entries and the other is an $M$-matrix. The study of random M-matrices with correlated entries would also be of great interest in wave physics in the subwavelength regime. In general, little is known about the statistical properties of eigenvalues and eigenvectors of random matrices whose entries are correlated. 

\section*{Acknowledgments}
  The work of AU was supported by the Swiss National Science Foundation grant number 200021--200307. The authors thank Bryn Davies, Erik Hiltunen, Antti Knowles, Ping Liu, and Svitlana Mayboroda for insightful discussions. 

\section*{Code availability}
The software used to produce the numerical results in this work is openly available at \\ \href{https://doi.org/10.5281/zenodo.15373013}{https://doi.org/10.5281/zenodo.15373013}.

\appendix

\section{Thouless criterion for banded $M$-matrices} \label{appendixA}
{

    In this appendix, we show that the methodology developed in this paper is not particular to matrices which are capacitance matrices of systems of subwavelength resonators. Consider an arbitrary banded Hermitian $M$-matrix $A\in \C^{N\times N}$, that is, a matrix with off-diagonal entries less than or equal to zero and nonnegative eigenvalues and denote its bands $\bm b_{-K}, \dots, \bm b_K$:
    \begin{gather*}
        A = \left(\begin{array}{ccccc}
             \bm b_0^{(1)}& \cdots & \bm b_K^{(1)}& \\
             \vdots & \ddots &  & \ddots & \\
             \bm b_{-K}^{(1)}&&\ddots&&\bm b_K^{(N-K)}  \\
             & \ddots&&\ddots&\vdots \\
             &&\bm b_{-K}^{(N-K)} & \cdots & b_{0}^{(N)}
        \end{array}\right).
    \end{gather*}
    
    We can then impose quasiperiodic boundary conditions on $A$ to get $A^\alpha \in \C^{N\times N}$ 
    \begin{gather*}
        A^\alpha = \left(\begin{array}{ccccccccc}
             \bm b_0^{(1)}& \cdots & \bm b_K^{(1)}&0&\cdots&0&\bm e^{-\i\alpha} b_{-K}^{(N-K+1)} & \cdots & e^{-\i\alpha}\bm b_{-1}^{(N)} \\
             \vdots & \ddots &  & \ddots &&&& \ddots & \vdots \\
             \bm b_{-K}^{(1)}&&\ddots&&\ddots&&&&  e^{-\i\alpha}\bm b_{-K}^{(N)} \\
             0&\ddots&&\ddots&&\ddots&&&0\\
             \vdots&&\ddots&&\ddots&&\ddots&&\vdots\\
             0&&&\ddots&&\ddots&&\ddots&0\\
             e^{\i\alpha}\bm b_K^{(N-K+1)}&&&&\ddots&&\ddots&&\bm b_K^{(N-K)}\\
             \vdots& \ddots&&&&\ddots&&\ddots&\vdots \\
             e^{\i\alpha}\bm b_1^{(N)}&\cdots& e^{\i\alpha}\bm b_K^{(N)}&0&\cdots&0&\bm b_{-K}^{(N-K)} & \cdots & \bm b_{0}^{(N)}
        \end{array}\right)
    \end{gather*}
    for $\alpha\in [-\pi,\pi]$\footnote{As the notion of system length $\iL$ is not generally well defined for a general $M$-matrix we have essentially chosen $\iL=1$.}.
    Note that this construction assumes that the bands are always known in full, \emph{i.e.}, $\bm b_k \in \C^N$ for all $k\in -K, \dots, K$ instead of $\bm b_k \in \C^{N-K}$. This is the $K$-banded analogue of having to choose the spacing $s_N$ when periodising the capacitance matrix in \cref{eq:qpcapmat}. In some situations, there are natural ways of obtaining the full bands, for example if the band entries are periodic or sampled from some random distributions. 
    However, even if this is not the case, there is a way to overcome. By truncating $A\in \C^{N\times N}$ to $A' \in \C^{(N-K)\times (N-K)}$, we obtain all the band information necessary to periodise $A'$ and obtain $(A')^\alpha$. If the size of $A$ is significantly larger than the number of bands $N \gg K$, then this introduces only a small error.

    Having periodised $A^\alpha$, we can calculate the band functions\footnote{Note that this is overloaded terminology and the $N$ band functions are not related to the $2K+1$ diagonal bands $\bm b_{-K},\dots, \bm b_K$ of $A$.} $\lambda_1(\alpha), \dots, \lambda_N(\alpha)$ and their corresponding level shifts $\delta\lambda_i$.
    Since the mean level spacings $\Delta(\lambda_i)$ can be estimated in the same way as for the capacitance matrix, we can calculate the Thouless ratio $g(\lambda_i)$ as in \cref{eq:thoulessratio}. 
    
    This allows the use of the Thouless criterion of localisation \cref{eq:thoulesscriterion} for banded Hermitian $M$-matrices. In \cref{fig:M-matrix variation}, we can see that the observed band function variations behave analogously to the capacitance matrix case in \cref{fig:band_function_qpbc}.
}

\begin{figure}[h]
    \centering
    \includegraphics[width=\textwidth]{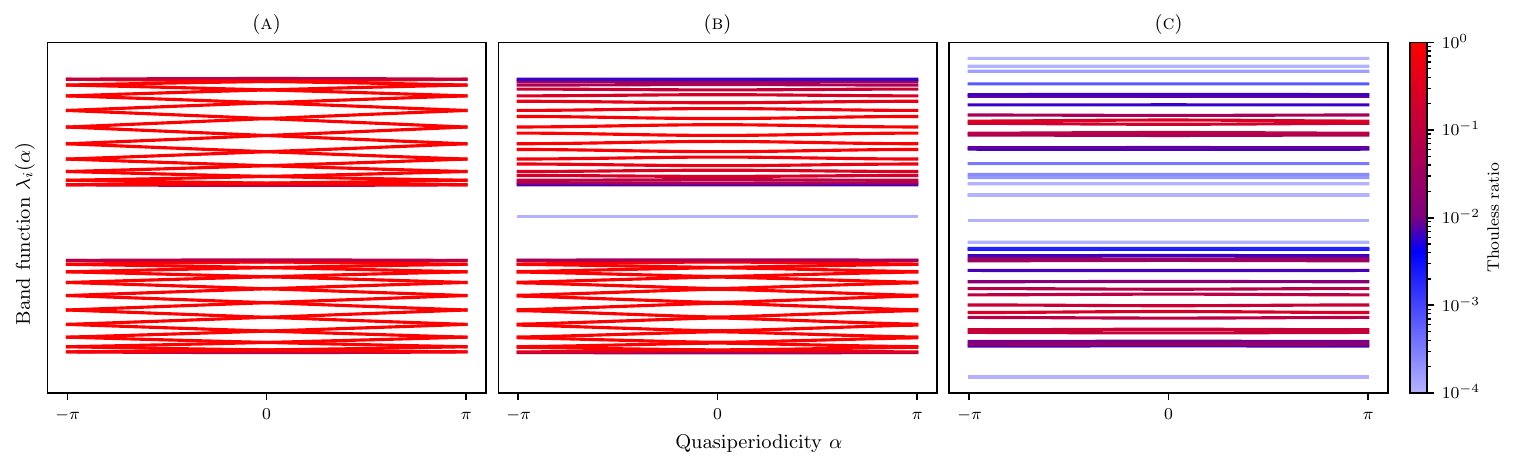}
    \caption{{ The band functions when imposing quasiperiodic boundary conditions on a (\textsc{a}) periodic, (\textsc{b}) defected and (\textsc{c}) randomly perturbed banded Hermitian $M$-matrix. As in \cref{fig:band_function_qpbc}, for each band the colour indicates its variation. 
    }} 
    \label{fig:M-matrix variation}
\end{figure}

\printbibliography
\end{document}